\newcommand{\nc}{\newcommand}
\newcommand{\ol}{\overline}
\newcommand{\ul}{\underline}
\newcommand{\es}{\emptyset}
\newcommand{\sm}{\setminus}
\newcommand{\ve}{\varepsilon}
\newcommand{\vp}{\varphi}
\newcommand{\ra}{\rightarrow}
\newcommand{\sse}{\subseteq}
\newcommand{\fa}{\forall}
\newcommand{\mr}{\mathrm}
\newcommand{\mc}{\mathcal}
\newcommand{\mf}{\mathfrak}
\newcommand{\DMO}{\DeclareMathOperator}
\newcommand{\DST}{\displaystyle}
\newcommand{\ZZ}{\mathbb{Z}}
\newcommand{\NN}{\mathbb{N}}
\newcommand{\NNZ}{\NN_0}
\newcommand{\RR}{\mathbb{R}}
\newcommand{\PP}{\mathbb{P}}
\newcommand{\inl}[1]{\lstinline$#1$}
\newcommand{\set}[1]{\{ #1 \}}
\newcommand{\setb}[1]{\big \{ \, #1 \, \big \}}
\DMO{\dom}{dom}
\DMO{\id}{id}
\DMO{\cod}{cod} 
\DMO{\rg}{rg} 
\DMO{\tcomp}{\trans{\circ}} 
\DMO{\simrv}{\,\sim\hspace{-0.05em}}
\DMO{\simlv}{\!\sim\,}
\nc{\simlvi}[1]{\!\sim_{#1}}
\DMO{\card}{card}
\DMO{\proj}{pr}
\DMO{\inj}{in}
\DMO{\symdif}{\vartriangle} 
\DMO{\addcup}{{\stackrel{\text{\raisebox{-2.2ex}[-0ex][-0ex]{\large$\cdot$}}}{\cup}}} 
\DMO{\addbcup}{{\stackrel{\text{\raisebox{-4.2ex}[-0ex][-0ex]{\Large$\cdot$}}}{\bigcup}}} 
\nc{\apprel}[3]{{#1}(#2)_{(#3)}} 
\DMO{\Rel}{\mf{REL}} 
\DMO{\Abb}{\mf{MAP}} 
\DMO{\Tra}{\mf{T}} 
\DMO{\Per}{\mf{S}} 
\DMO{\Pert}{\Per_t} 
\DMO{\Ptr}{\mf{PT}} 
\DMO{\fix}{fix} 
\DMO{\Peri}{\Per_i} 
\DMO{\Pers}{\Per_s} 
\DMO{\Rrel}{\Rel_r} 
\DMO{\Srel}{\Rel_s} 
\DMO{\Trel}{\Rel_t} 
\DMO{\konkat}{\sqcup} 
\DMO{\cmpl}{\complement^1} 
\nc{\cmpli}[1]{\complement^1_{#1}} 
\DMO{\cmplz}{\complement^0} 
\nc{\cmplzi}[1]{\complement^0_{#1}} 
\DMO{\cmplzo}{\complement^*} 
\nc{\cmplzoi}[1]{\complement^*_{#1}} 
\DMO{\fsigma}{{\mf{F}}_{\sigma}} 
\DMO{\gdeltao}{\mf{G}_{\sigma}}
\DMO{\fs}{{\mf{F}}_{s}} 
\DMO{\fss}{{\mf{F}}_{s}^*} 
\nc{\zf}{\mr{ZF}}
\nc{\zfmf}{\zf^0} 
\nc{\zfc}{\mr{ZFC}}
\nc{\zfcmf}{\zfc^0} 
\nc{\bst}{\mr{BST}} 
\newcommand{\tb}[2]{\set{#1, \dots, #2}} 
\DMO{\re}{Re}
\DMO{\im}{Im}
\DMO{\sgn}{sgn} 
\providecommand{\abs}[1]{\lvert #1 \rvert} 
\providecommand{\ioa}[2]{\,]#1, #2]}
\nc{\untit}[2]{{#1}^{#2 \downarrow}} 
\nc{\obit}[2]{{#1}^{#2 \uparrow}} 
\DMO{\inttop}{\tau_{\mr{O}}} 
\DMO{\rointtop}{\tau_+} 
\DMO{\lointtop}{\tau_-} 
\DMO{\sid}{\mc{IDL}} 
\DMO{\skid}{\mc{CID}} 
\DMO{\smid}{\sid_m} 
\DMO{\smkid}{\skid_m} 
\DMO{\nachbarn}{\Gamma}
\DMO{\enachbarn}{N}
\DMO{\nachbarnr}{\Gamma_{\!\mr{r}}}
\DMO{\nachbarnz}{\widetilde{\Gamma}}
\DMO{\nachbarnzr}{\widetilde{\Gamma}_{\!\mr{r}}}
\DMO{\inzEK}{\mc{I}^{\mr{V}}}
\DMO{\inzEKe}{\mc{I}^{\mr{V}}_1}
\DMO{\inzEKz}{\mc{I}^{\mr{V}}_2}
\nc{\inzEKi}[1]{\mc{I}^{\mr{V}}_{#1}}
\DMO{\inzKE}{\mc{I}^{\mr{E}}}
\DMO{\inzKEe}{\mc{I}^{\mr{E}}_1}
\DMO{\inzKEz}{\mc{I}^{\mr{E}}_2}
\nc{\inzKEi}[1]{\mc{I}^{\mr{E}}_{#1}}
\DMO{\inz}{I}
\DMO{\tinz}{\trans{\inz}} 
\DMO{\adjE}{\mc{A}^{\mr{V}}}
\DMO{\adjEe}{\mc{A}^{\mr{V}}_1}
\DMO{\adjEz}{\mc{A}^{\mr{V}}_2}
\nc{\adjEi}[1]{\mc{A}^{\mr{V}}_{#1}}
\DMO{\adjor}{\mc{A}_{\mr{S}}} 
\DMO{\adjK}{\mc{A}^{\mr{E}}}
\DMO{\adj}{A}
\DMO{\degmin}{\ul{\deg}}
\DMO{\degmax}{\ol{\deg}}
\DMO{\degdur}{\widetilde{\deg}} 
\DMO{\ideg}{idg} 
\DMO{\odeg}{odg} 
\DMO{\degmaxl}{\ol{\deg}_{<}}
\DMO{\degl}{\deg_{<}}
\DMO{\rankmin}{\ul{\rank}}
\DMO{\rankmax}{\ol{\rank}}
\DMO{\rankdur}{\widetilde{\rank}}
\DMO{\rankmaxl}{\ol{\rank}_{<}}
\DMO{\rankl}{\rank_{<}}
\DMO{\vertexcon}{\kappa} 
\DMO{\edgecon}{\lambda} 
\DMO{\treewidth}{tw} 
\DMO{\girth}{g} 
\DMO{\circumference}{cf} 
\DMO{\length}{lgth} 
\DMO{\npm}{\Phi} 
\DMO{\concomp}{cc} 
\DMO{\nconcomp}{ncc} 
\DMO{\indprim}{ip} 
\DMO{\indimprim}{iip} 
\DMO{\bouquet}{B}
\DMO{\dipol}{D}
\DMO{\jkg}{J} 
\DMO{\vjkg}{VK} 
\DMO{\Tr}{Tr} 
\DMO{\Ind}{Ind} 
\DMO{\Zuo}{Mat} 
\DMO{\Pzuo}{PMat} 
\DMO{\St}{St} 
\DMO{\Ints}{Ints} 
\DMO{\Cov}{Cov} 
\DMO{\closse}{clo_{\sse}} 
\DMO{\clospe}{clo_{\supseteq}} 
\DMO{\edgemg}{ML} 
\DMO{\kneserg}{K} 
\DMO{\knesern}{\tau_0} 
\DMO{\nis}{nis} 
\DMO{\PBD}{PBD}
\nc{\BD}[1]{{#1}\text{-}\mr{BD}}
\DMO{\BIBD}{BIBD}
\DMO{\Steiner}{S}
\DMO{\SteinerTriple}{STS}
\DMO{\SteinerQuadruple}{SQS}
\DMO{\progeo}{PG} 
\DMO{\affgeo}{AG} 
\DMO{\astriv}{A_t} 
\DMO{\KochenSpecker}{KS}
\DMO{\KochenSpeckerErw}{KS'}
\DMO{\rankd}{rd}
\DMO{\mnconcomp}{mncc}
\DMO{\gpk}{\Box} 
\DMO{\gpw}{\times} 
\DMO{\gps}{\boxtimes} 
\DMO{\gjoin}{\boxdot} 
\DMO{\gjoinplus}{\boxplus} 
\DMO{\Ketten}{\mc{L}}
\DMO{\Antiketten}{\mc{A}}
\DMO{\comparable}{\Bumpeq}
\DMO{\incomparable}{\parallel}
\DMO{\pot}{\PP} 
\DMO{\pote}{\PP_f} 
\DMO{\potfv}{\overrightarrow{\PP}} 
\DMO{\potfvn}{\overrightarrow{\PP}^{\!*}} 
\DMO{\potfr}{\overleftarrow{\PP}} 
\DMO{\fak}{fac}
\DMO{\partitiont}{p}
\DMO{\teilt}{|}
\nc{\Prim}{\mc{PR}} 
\DMO{\ord}{ord}
\DMO{\ggt}{ggt}
\DMO{\kgv}{kgv}
\DMO{\opmod}{mod}
\DMO{\opdiv}{div}
\DMO{\eulphi}{\vp}
\newcommand{\Cls}{\mc{CLS}}
\nc{\Clsoo}{\Cls^{1,1}} 
\DMO{\munpuclash}{\mu{}NH}
\DMO{\hdef}{\delta_{\mr{h}}} 
\DMO{\rdef}{\delta_{\mr{r}}} 
\DMO{\nulli}{null} 
\DMO{\lit}{lit}
\DMO{\var}{var}
\DMO{\val}{val}
\DMO{\res}{\diamond} 
\DMO{\resop}{Res} 
\DMO{\mresop}{mRes} 
\DMO{\dpl}{DP} 
\DMO{\comp}{Comp} 
\DMO{\compex}{\comp_{ER}} 
\DMO{\compr}{\comp_R} 
\DMO{\comptr}{\comp_{tR}} 
\newcommand{\Us}{\mc{U}} 
\DMO{\comptru}{\comp_{tR(\Us)}} 
\DMO{\compru}{\comp_{R(\Us)}}
\DMO{\hardness}{hd}
\DMO{\pebf}{PF} 
\DMO{\rt}{rt} 
\DMO{\nds}{nds} 
\DMO{\lvs}{lvs} 
\DMO{\nlvs}{\#lvs} 
\DMO{\nnds}{\#nds} 
\DMO{\height}{ht}
\DMO{\depth}{d}
\DMO{\cls}{cls}
\DMO{\newcommandls}{\#cls}
\DMO{\ds}{ds}
\DMO{\dst}{ds_T}
\DMO{\dsg}{ds_G}
\DMO{\dpr}{dp}
\DMO{\dprt}{dp_T}
\DMO{\dprg}{dp_G}
\DMO{\ind}{in}
\DMO{\indg}{in_G}
\DMO{\outd}{out}
\DMO{\outdg}{out_G}
\DMO{\peb}{peb} 
\DMO{\taum}{\max \tau}
\DMO{\tauprob}{\tau^p} 
\DMO{\mtau}{\mf{T}} 
\DMO{\concatbt}{;} 
\DMO{\compobt}{\merge} 
\nc{\bth}[1]{\langle{#1}\rangle} 
\DMO{\pc}{pc}
\DMO{\aut}{Auk} 
\DMO{\laut}{LAuk} 
\DMO{\lautz}{LAuk_0} 
\DMO{\maut}{MAuk}
\newcommand{\A}{\mc{A}} 
\DMO{\nv}{N} 
\DMO{\na}{\nv_a} 
\DMO{\nA}{\nv_{\A}} 
\DMO{\nla}{\nv_{la}}
\DMO{\nbla}{\nv_{bla}}
\DMO{\nma}{\nv_{ma}}
\DMO{\npa}{\nv_{pa}}
\DMO{\baut}{BAuk} 
\DMO{\blaut}{BLAuk} 
\DMO{\blautz}{BLAuk_0} 
\DMO{\paut}{PAut} 
\DMO{\pautz}{PAut_0} 
\DMO{\resouz}{\overset{\Us, 0}{\vdash}}
\DMO{\resouo}{\overset{\Us, 1}{\vdash}}
\DMO{\resouk}{\overset{\Us,\, k}{\vdash}}
\DMO{\resou}{\,\overset{\Us}{\vdash}\,}
\DMO{\resour}{\,\overset{\Us_0}{\vdash}\,}
\DMO{\resourz}{\,\overset{\Us_0, 0}{\vdash}\,}
\DMO{\uresouk}{\resouk\hspace{-0.6em}\mbox{\raisebox{0.8ex}{\tiny u}}}
\DMO{\bresouk}{\resouk\hspace{-0.6em}\mbox{\raisebox{0.8ex}{\tiny b}}}
\DMO{\iresouk}{\resouk\hspace{-0.6em}\mbox{\raisebox{0.8ex}{\tiny i}}}
\DMO{\resok}{\overset{k}{\vdash}} 
\DMO{\wid}{wid} 
\DMO{\widl}{\hspace*{-1.5pt}wid}
\DMO{\widb}{\sideset{^{\mr{b}}}{}\widl}
\DMO{\widi}{\sideset{^{\mr{i}}}{}\widl}
\DMO{\cwid}{\mc{W}} 
\DMO{\cwidl}{\hspace*{-1pt}\mc{W}} 
\DMO{\cwidb}{\sideset{^{\mr{b}}}{}\cwidl}
\DMO{\cwidi}{\sideset{^{\mr{i}}}{}\cwidl}
\DMO{\modp}{mod_p} 
\DMO{\modt}{mod_t} 
\DMO{\moda}{\mf{S}} 
\DMO{\modf}{fal} 
\DMO{\mods}{mod} 
\DMO{\mus}{MU}
\DMO{\mss}{MS}
\DMO{\cmus}{CMU}
\DMO{\cmss}{CMS}
\DMO{\eqs}{EQ} 
\DMO{\neqs}{NEQ} 
\DMO{\scf}{CM} 
\DMO{\acf}{DCM} 
\DMO{\cmg}{cmg} 
\DMO{\cmdg}{cmdg} 
\DMO{\cg}{cg} 
\DMO{\gcg}{cgg} 
\DMO{\gcdg}{cgdg} 
\DMO{\rsg}{rg} 
\DMO{\srsg}{srg} 
\DMO{\vhg}{vhg} 
\DMO{\cvg}{cvg} 
\DMO{\cvmg}{cvmg} 
\DMO{\vig}{vig} 
\DMO{\vcg}{vcg} 
\DMO{\nscf}{bcp} 
\DMO{\nacf}{bcp_d} 
\DMO{\bcp}{bcp} 
\DMO{\tbcp}{tbcp} 
\DMO{\nsat}{\#sat}
\DMO{\nusat}{\#usat}
\DMO{\maxsat}{maxsat}
\DMO{\pmin}{\ul{rk}}
\DMO{\pmax}{rk}
\DMO{\pav}{\widetilde{rk}}
\DMO{\ldeg}{ldg}
\DMO{\minldeg}{\ul{ldg}}
\DMO{\maxldeg}{\ol{ldg}}
\DMO{\vdeg}{vdg}
\DMO{\minvdeg}{\ul{vdg}}
\DMO{\maxvdeg}{\ol{vdg}}
\DMO{\avvdeg}{\widetilde{vdg}}
\DMO{\cldeg}{cldg} 
\DMO{\mvardu}{\mu\!\vdeg}
\DMO{\Inj}{Inj}
\newcommand{\OKlibrary}{\texttt{OKlibrary}}
\newcommand{\OKinternet}{\url{http://www.ok-sat-library.org}}
\DMO{\Ex}{Ex} 
\nc{\ramz}[3]{\mr{ram}_{#1}^{#2}(#3)} 
\DMO{\ramzg}{ram} 
\nc{\waez}[2]{\mr{vdw}_{#1}(#2)} 
\DMO{\waezg}{vdw} 
\nc{\gtz}[2]{\mr{grt}_{#1}(#2)} 
\DMO{\gtzg}{grt} 
\DMO{\FvdW}{F_{W}} 
\DMO{\FRam}{F_{R}} 
\DMO{\arithp}{ap} 
\DMO{\arithpp}{ap_{pr}} 
\DMO{\crarithp}{cr_{ap}} 
\DMO{\crarithpp}{cr_{ap}^{pr}} 
\nc{\bm}{\boldmath}
\nc{\bmm}[1]{\mbox{\bm$\DST #1$}}
\nc{\mi}[1]{\bmm{\mathrm{(#1):}} \quad}
\DeclareMathOperator{\gwaez}{N}
\DeclareMathOperator{\gcr}{cr}
\newcommand{\Fgt}[2]{\mr{F}^{\mr{GT}}_{#1}(#2)} 
\begin{document}

\pagestyle{headings}

\title{Exact Ramsey Theory: \\Green-Tao numbers and SAT}

\author{Oliver Kullmann}
\institute{
  Computer Science Department\\
  Swansea University\\
  \email{O.Kullmann@Swansea.ac.uk}\\
  \url{http://cs.swan.ac.uk/~csoliver}
}

\maketitle

\begin{abstract}
  We consider the links between Ramsey theory in the integers, based on van der Waerden's theorem, and (boolean, CNF) SAT solving. We aim at using the problems from exact Ramsey theory, concerned with computing Ramsey-type numbers, as a rich source of test problems, where especially methods for solving hard problems can be developed. We start our investigations here by reviewing the known \emph{van der Waerden numbers}, and we discuss directions in the parameter space where possibly the growth of van der Waerden numbers $\waez{m}{k_1,\dots,k_m}$ is only polynomial (this is important for obtaining feasible problem instances). We introduce \emph{transversal extensions} as a natural way of constructing mixed parameter tuples $(k_1, \dots, k_m)$ for van-der-Waerden-like numbers $\gwaez(k_1, \dots, k_m)$, and we show that the growth of the associated numbers is guaranteed to be linear. Based on Green-Tao's theorem (``the primes contain arbitrarily long arithmetic progressions'') we introduce the \emph{Green-Tao numbers} $\gtz{m}{k_1, \dots, k_m}$, which in a sense combine the strict structure of van der Waerden problems with the (pseudo-)randomness of the distribution of prime numbers. Using standard SAT solvers (look-ahead, conflict-driven, and local search) we determine the basic values. It turns out that already for this single form of Ramsey-type problems, when considering the best-performing solvers a wide variety of solver types is covered. For $m > 2$ the problems are non-boolean, and we introduce the \emph{generic translation scheme}, which offers an infinite variety of translations (``encodings'') and covers the known methods. In most cases the special instance called \emph{nested translation} proved to be far superior over its competitors (including the direct translation).
\end{abstract}

\section{Introduction}
\label{sec:intro}

The applicability of SAT solvers has made tremendous progress over the last 15 years; see the recent handbook \cite{2008HandbuchSAT}. We are concerned here with solving (concrete) combinatorial problems (see \cite{Zha09HBSAT} for an overview). Especially we are concerned with the computation of van-der-Waerden-like numbers, which is about colouring hypergraphs of arithmetic progressions.\footnote{This report is an extended version of \cite{Kullmann2010GreenTao}.}

An \emph{arithmetic progression} of size $k \in \NNZ$ in $\NN$ is a set $P \subset \NN$ of size $k$ such that after ordering (in the natural order), two neighbours always have the same distance. So the arithmetic progressions of size $k > 1$ are the sets of the form $P = \set{a + i \cdot d : i \in \tb{0}{k-1}}$ for $a, d \in \NN$. Van der Waerden's Theorem (\cite{vanderWaerden1927Baudet}) shows that whenever the set $\NN$ of natural numbers is partitioned into finitely many parts, some part must contain arithmetic progressions of arbitrary size. The finite version, which is equivalent to the above infinite version, says that for every progression size $k \in \NN$ and every number $m \in \NN$ of parts there exists some $n_0 \in \NN$ such that for $n \ge n_0$ every partitioning of $\tb 1n$ into $m$ parts has some part which contains an arithmetic progression of size $k$. The smallest such $n_0$ is denoted by \bmm{\waez{m}{k}}, and is called a \emph{vdW-number}. The subfield of Ramsey theory concerned with van der Waerden's theorem is for over 70 years now an active field of mathematics and combinatorics; for an elementary introduction see \cite{LandmanRobertson2003ArithmeticProgressions}.

We are concerned here with \emph{exact Ramsey theory}, that is, computing vdW-like numbers if possible, or otherwise producing (concrete) lower bounds. \cite{DransfieldLiuMarekTruszcynski2004VanderWaerden} introduced the application of SAT for computing vdW-numbers, showing that all known vdW-numbers (at that time) were rather easily computable with SAT solvers. With \cite{KourilPaulW26} yet SAT had its biggest success, computing the new (major) vdW-number $\waez{2}{6} = 1132$ (the problem of computing $\waez{2}{6}$ is mentioned in \cite{LandmanRobertson2003ArithmeticProgressions} as a difficult research problem). See \cite{Ahmed2009vdW,Ahmed2010vdW} for the current state-of-the-art. Regarding lower bounds, the best lower bounds currently one finds in \cite{HerwigHeuleLambalgenMaaren2005VanderWaerden}.\footnote{see \url{http://www.st.ewi.tudelft.nl/sat/waerden.php} for updates}

VdW-numbers for ``core'' parameter values (see Definition \ref{def:corep}) grow rapidly, and thus only few are known (see Section \ref{sec:thmGreenTao}). The first contribution of this article is the notion of a \emph{transversal extension} (see Definition \ref{def:corep}) of a parameter tuple, which allows to grow parameter tuples such that (only) linear growth of the associated vdW-numbers is guaranteed. The linear growth is proven in a general framework in Theorem \ref{thm:polygen}, and applied to vdW-numbers in Corollary \ref{cor:waezpoly}.

Next we introduce \emph{Green-Tao numbers} (``GT-numbers''; see Definition \ref{def:gtz}), which are defined as the vdW-numbers but using the first $n$ \emph{prime numbers} instead of the first $n$ natural numbers. The existence of these numbers is given by the celebrated Green-Tao Theorem (\cite{GreenTao2005Primes}). In Corollary \ref{cor:gtzpoly} we show that also for GT-numbers transversal extension numbers grow only linearly. In the remainder of the article we are concerned with computing GT-numbers.

For binary parameter tuples ($m=2$ above) the problems of computing vdW- or GT-numbers have a canonical translation to (boolean) SAT problems, while for $m > 2$ we still have a canonical translation into non-boolean SAT problems (as is the case in general for hypergraph colouring problems; see \cite{Kullmann2007ClausalFormZI}), but for using standard (boolean) SAT solvers the problem of a boolean translation arises. In Section \ref{sec:transnbb} we introduce the \emph{generic translation scheme}, with seven natural instances (amongst them the well-known direct and logarithmic translations). As it turns out, in nearly all cases for all solver types the \emph{weak nested translation} (introduced in \cite{Kullmann2007ClausalFormECCC2}) performed far best, with the only exception that for relatively large numbers of colours the logarithmic translation was better.

For this (initial) phase of investigations into GT-numbers we just used ``off-the-shelves'' SAT solvers, also aiming at some form of basic understanding why which type of solver is best on certain parameter ranges. For over one year on average 10 processors were running, with a lot of manual interaction and adjustment to find the right solvers and translations, and to set the parameters (most basic the number of vertices), establishing the basic Green-Tao numbers. All generators and the details of the computations are available in the open-source research platform \OKlibrary{} (see \cite{Kullmann2009OKlibrary}).\footnote{\OKinternet} See Section \ref{sec:compGreenTao} for the results of these computations. We conclude this article by a discussion of interesting research directions in Section \ref{sec:summary}.

\section{A few notions and notations}
\label{sec:notions}

We use $\NNZ = \ZZ_{\ge 0}$ and $\NN = \NNZ \sm \set{0}$. A \emph{finite hypergraph} $G$ is a pair $G = (V,E)$ where $V$ is a finite set and $E \sse \pot(V)$ (that is, $E$ is set of subsets of $V$); we use $V(G) := V$ and $E(G) := E$. An \emph{$m$-colouring} of a hypergraph $G$ is a map $f: V(G) \ra \tb 1m$ such that no hyperedge is monochromatic, that is, for every $H \in E(G)$ there are $v, w \in H$ with $f(v) \not= f(w)$. Regarding (boolean) clause-sets, complementation of boolean variables $v$ is denoted by $\ol{v}$, (boolean) clauses are finite and clash-free sets of (boolean) literals, and (boolean) clause-sets are finite sets of (boolean) clauses.

\section{The theorem of Green-Tao, and Green-Tao numbers}
\label{sec:thmGreenTao}

The numbers $\waez mk$ introduced in Section \ref{sec:intro} are ``diagonal vdW-numbers'', while we consider also the ``non-diagonal'' or \emph{mixed vdW-numbers}, which are defined as follows.

\begin{definition}\label{def:waez}
  A \textbf{parameter tuple} is an element of $\NN_{\ge 2}^m$ for some $m \in \NN$ which is monotonically non-decreasing (that is, sorted in non-decreasing order). For a parameter tuple $(k_1, \dots, k_m)$ the \textbf{vdW-number \bmm{\waez{m}{k_1, \dots, k_m}}} is the smallest $n_0 \in \NN$ such that for every $n \ge n_0$ and every $f: \tb 1n \ra \tb 1m$ there exists some ``colour'' (or ``part'') $i \in \tb 1m$ such that $f^{-1}(i)$ contains an arithmetic progression of size $k_i$.
\end{definition}
In a systematic study of parameter tuples and their operations, one likely should drop the sorting condition, and call our parameter tuples ``sorted'', however in this report we only consider sorted parameter tuples.

Obviously we have $\waez{m}{k_1, \dots, k_m} \le \waez{m}{\max(k_1, \dots, k_m)}$ (note that the right-hand side denotes a diagonal vdW-number), and thus also the mixed vdW-numbers exist (are always finite). The most up-to-date collection of precise (mixed) vdW-numbers one finds in \cite{Ahmed2009vdW,Ahmed2010vdW}.\footnote{The (updated) online-version is \url{http://users.encs.concordia.ca/~ta_ahmed/vdw.html}.} For the sake of completeness we state the numbers here, but for references we refer to \cite{Ahmed2009vdW,Ahmed2010vdW}. We introduce the following organisation of the parameter space.
\begin{definition}\label{def:corep}
  A parameter tuple is \textbf{trivial} if all entries are equal to $2$, otherwise it is \textbf{non-trivial}. A \textbf{simple} parameter tuple has length $1$, otherwise it is \textbf{non-simple}. A parameter tuple is a \textbf{core tuple} if it is non-simple and if all entries are greater than or equal to $3$. A parameter tuple $t$ is a \textbf{(transversal) extension} of a parameter tuple $t'$ if $t$ can be obtained from $t'$ by adding entries equal to $2$ to the front of $t'$. A transversal extension of a simple parameter tuple is called an \textbf{extended simple tuple} or a \textbf{transversal tuple}, while an transversal extension of a core tuple is called an \textbf{extended core tuple}. Finally a parameter tuple is \textbf{diagonal}, if it is constant (all entries are equal), while otherwise it is \textbf{non-diagonal} or \textbf{mixed}.

  Accordingly we speak of (and are interested in) \textbf{trivial vdW-numbers}, \textbf{simple vdW-numbers}, \textbf{core vdW-numbers}, \textbf{transversal vdW-numbers}, \textbf{extended core vdW-num\-bers}, and \textbf{diagonal vdW-numbers}.
\end{definition}
The trivial vdW-numbers are $\waez{m}{2} = m+1$, while the simple vdW-numbers are given by $\waez{1}{k} = k$. The known core vdW-numbers are as follows.
\begin{enumerate}
\item $24$ binary core vdW-numbers $\waez{2}{a,b}$ are known:
  \begin{center}
    \hspace*{-1em}
    \begin{tabular}[c]{c||*{16}{c@{ $\;$}}}
      $\begin{array}[r]{c@{\hspace{0.5em}}r}
        & b\\[-1.9ex]
        a
      \end{array}\hspace{-0.1em}$ & $3$ & $4$ & $5$ & $6$ & $7$ & $8$ & $9$ & $10$ & $11$ & $12$ & $13$ & $14$ & $15$ & $16$ & $17$ & $18$\\
      \hline\hline
      $3$ & $9$ & $18$ & $22$ & $32$ & $46$ & $58$ & $77$ & $97$ & $114$ & $135$ & $160$ & $186$ & $218$ & $238$ & $279$ & $312$\\
      $4$ & - & 35 & 55 & 73 & 109 & 146\\
      $5$ & - & - & $178$ & $206$\\
      $6$ & - & - & - & $1132$
    \end{tabular}
  \end{center}
\item $4$ core vdW-numbers $\waez{3}{a,b,c}$ and one core vdW-number $\waez{4}{a,b,c,d}$ are known:
  \begin{center}
    \begin{tabular}[c]{c||*{3}{c}}
      $\begin{array}[r]{c@{\hspace{0.5em}}r}
        & c\\[-1.9ex]
        a,b
      \end{array}\hspace{-0.1em}$ & $3$ & $4$ & $5$\\
      \hline\hline
      $3,3$ & $27$ & $51$ & $80$\\
      $3,4$ & - & 89
    \end{tabular} , \quad
    \begin{tabular}[c]{c||*{1}{c}}
      $\begin{array}[r]{c@{\hspace{0.5em}}r}
        & d\\[-1.9ex]
        a,b,c
      \end{array}\hspace{-0.1em}$ & $3$\\
      \hline\hline
      $3,3,3$ & $76$
    \end{tabular} .
  \end{center}
\end{enumerate}

A basic quest for this article is in what ``directions'' can one move through the parameter space while experiencing only polynomial growth? Regarding vdW-numbers we consider the conjecture (perhaps better called ``question''), that extended core parameter tuples grow only polynomially in the extension length, where for parameter tuples $a, b$ by $a;b$ we denote their concatenation:
\begin{conjecture}\label{conj:onecomp}
  For every (fixed) parameter tuple $t$ of length $m$ the map $k \in \NN \mapsto \waez{m+1}{t;(k)}$ is polynomially bounded in $k$ (depending on $t$).
\end{conjecture}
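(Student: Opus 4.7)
The plan is induction on the length $m$ of $t$, exploiting the already-proven linear growth for transversal extensions (Corollary~\ref{cor:waezpoly}) to peel off leading $2$'s and thereby reduce to the ``core'' case. Write $t = (2, \dots, 2, a_{j+1}, \dots, a_m)$ with $j$ leading $2$'s and either $j = m$ (trivial) or $a_{j+1} \ge 3$. In the trivial case $t;(k)$ is itself a transversal extension of the simple tuple $(k)$, so Corollary~\ref{cor:waezpoly} gives linear growth. For $j < m$, I would seek a strengthening of Theorem~\ref{thm:polygen}: given that $k \mapsto \waez{m-j+1}{a_{j+1}, \dots, a_m, k}$ is polynomial, each leading $2$ should be absorbable at the cost of a bounded increment in the polynomial degree, delivering a polynomial bound for the full tuple.

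This shifts the entire burden to the \emph{off-diagonal core case}: fix $(a_1, \dots, a_m)$ with all $a_i \ge 3$ and prove $k \mapsto \waez{m+1}{a_1, \dots, a_m, k}$ is polynomial. Here I would attempt induction on $m$, analysing a hypothetical bad colouring $f : \tb{1}{n} \to \tb{1}{m+1}$ through the set $C := f^{-1}(m+1)$. Since $C$ contains no arithmetic progression of length $k$, Szemer\'edi / Varnavides-type tools constrain its density, so the other $m$ colours carry density $1 - o(1)$; one then tries to split this bulk by applying the inductive hypothesis on long sub-progressions inside $\tb{1}{n} \setminus C$, using the fact that the outer parameters $a_i$ are now \emph{fixed}.

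The decisive obstacle is already visible in the first non-trivial base case $m = 1$, $t = (3)$, where the conjecture demands a polynomial bound on $\waez{2}{3,k}$. The best known upper bounds, coming from the most recent quantitative versions of Roth's and Szemer\'edi's theorems (Gowers, Kelley--Meka), remain super-polynomial, so the real difficulty is additive-combinatorial rather than Ramsey-combinatorial. All the inductive scaffolding above is conditional on establishing polynomial off-diagonal vdW-bounds for fixed leading parameters, and I would expect that single step to absorb essentially all the technical work, with the reductions from longer tuples and from leading $2$'s being bookkeeping on top of it.
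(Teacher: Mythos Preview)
The statement you are trying to prove is explicitly labelled a \emph{conjecture} in the paper; there is no proof of it anywhere in the text. The paper merely lists numerical evidence (the tables of $\waez{2}{3,k}$, $\waez{2}{4,k}$, etc.) and the lower bound of Brown--Landman--Robertson, and even calls it ``perhaps better called a question''. So there is nothing to compare your attempt against: the paper does not claim the result.

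Your own write-up is in fact consistent with this. The reduction from a general tuple to a core tuple is sound: if $t$ has $j$ leading $2$'s and core part $t' = (a_{j+1},\dots,a_m)$, then Lemma~\ref{lem:waezpolys} (not Corollary~\ref{cor:waezpoly}, which bounds growth in the \emph{number of $2$'s}, not in $k$) gives
\[
  \waez{m+1}{t;(k)} \;\le\; (j+1)\cdot \waez{m-j+1}{t';(k)},
\]
so a polynomial bound for the core case would indeed propagate. But then you correctly observe that already the base case $t'=(3)$ asks for a polynomial upper bound on $\waez{2}{3,k}$, which is open: the best current bounds via quantitative Roth/Szemer\'edi are far from polynomial. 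Your inductive plan for larger core tuples compounds the problem rather than circumventing it, since each step still requires off-diagonal polynomial bounds that nobody knows how to prove.

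In short: your reductions are fine bookkeeping, your diagnosis of the obstruction is accurate, and that obstruction is exactly why the paper states this as a conjecture rather than a theorem. There is no missing trick you have overlooked; the statement is genuinely open.
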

In other words, for the above tables and all similarly constructed tables growth in every row is polynomially bounded. The evidence for Conjecture \ref{conj:onecomp} is as follows.
\begin{enumerate}
\item The case $t=(3)$, and more precisely $\waez{2}{3,k} \le k^2$, has been suggested in \cite{BrownLandmanRobertson2008WaerdenBounds}. The numbers $\waez{2}{3,k}$ are known for $1 \le k \le 18$ (see above). Additionally, our experiments yield the following conjectured values (where using ``$\ge x$'' means that we believe that actually equality holds, while the lower bound ``$> x-1$'' has been shown), further supporting the conjectured upper bound:\footnote{All lower bounds are obtained by local-search algorithms from the Ubcsat-suite (see \cite{TompkinsHoos2004Ubcsat}), and all data is available through the \OKlibrary.} $\waez{2}{3,19} \ge 349$, $\waez{2}{3,20} \ge 389$, $\waez{2}{3,21} \ge 416$.
\item Considering $t=(4)$, the numbers $\waez{2}{4,k}$ are known for $1 \le k \le 8$ (see above), while the bound $\waez{2}{4,9} > 254$ is in \cite{Ahmed2009vdW}; we can improve this to $\waez{2}{4,9} \ge 309$, and furthermore $\waez{2}{4,10} > 328$. So going from $k=8$ to $k=9$ we see a rather big jump, however possibly from $k=9$ to $k=10$ only a small change might take place.
\item For general $t = (k_0)$ with $k_0 \ge 3$, in \cite{BrownLandmanRobertson2008WaerdenBounds} the lower bound $\waez{2}{k_0,k} \ge k^{k_0 - 1 - \log(\log(k))}$ for sufficiently large $k$ has been shown. It seems consistent with current knowledge that we could have $\waez{2}{k_0,k} \le k^{k_0-1}$ for all $k, k_0 \ge 1$.
\end{enumerate}
A contribution of this article is the systematic consideration of transversal extensions as defined in Definition \ref{def:corep}. The known $33 + 10 + 1 + 6 = 50$ extended core vdW-numbers are as follows (again, for references see \cite{Ahmed2009vdW}); transversal vdW-numbers are presented in Section \ref{sec:transvdwnumbers}; see Subsection \ref{sec:remarkstrans} for general remarks.
\begin{enumerate}
\item Extending $(3,k)$ by $m$ $2$'s, i.e., the numbers $\waez{m+2}{2, \dots, 2, 3,k}$:
  \begin{center}
    \begin{tabular}[c]{c||*{11}{c@{ $\;$}}}
      $\begin{array}[r]{c@{\hspace{0.5em}}r}
        & k\\[-1.9ex]
        m
      \end{array}\hspace{-0.1em}$ & $3$ & $4$ & $5$ & $6$ & $7$ & $8$ & $9$ & $10$ & $11$ & $12$ & $13$\\
      \hline\hline
      $0$ & 9 & 18 & 22 & 32 & 46 & 58 & 77 & 97 & 114 & 135 & 160\\
      \hline
      $1$ & 14 & 21 & 32 & 40 & 55 & 72 & 90 & 108 & 129 & 150 & 171\\
      $2$ & 17 & 25 & 43 & 48 & 65 & 83 & 99 & 119\\
      $3$ & 20 & 29 & 44 & 56 & 72 & 88\\
      $4$ & 21 & 33 & 50 & 60\\
      $5$ & 24 & 36\\
      $6$ & 25\\
      $7$ & 28
    \end{tabular}
  \end{center}
\item Extending $(4,k)$ resp.\ $(5,k)$ by $m$ $2$'s, i.e., numbers $\waez{m+2}{2, \dots, 2, 4,k}$ resp.\ $\waez{m+2}{2, \dots, 2, 5,k}$:
  \begin{center}
    \begin{tabular}[c]{c||*{5}{c@{ $\;$}}}
      $\begin{array}[r]{c@{\hspace{0.5em}}r}
        & k\\[-1.9ex]
        m
      \end{array}\hspace{-0.1em}$ & $4$ & $5$ & $6$ & $7$ & $8$ \\
      \hline\hline
      $0$ & 35 & 55 & 73 & 109 & 146\\
      \hline
      $1$ & 40 & 71 & 83 & 119\\
      $2$ & 53 & 75 & 93\\
      $3$ & 54 & 79\\
      $4$ & 56
    \end{tabular} , \quad
    \begin{tabular}[c]{c||*{1}{c@{ $\;$}}}
      $\begin{array}[r]{c@{\hspace{0.5em}}r}
        & k\\[-1.9ex]
        m
      \end{array}\hspace{-0.1em}$ & $5$ \\
      \hline\hline
      $0$ & 178\\
      \hline
      $1$ & 180
    \end{tabular} .
  \end{center}
\item Extending $(3,3,k)$ by $m$ $2$'s, i.e., numbers $\waez{m+3}{2, \dots, 2, 3,3,k}$:
  \begin{center}
    \begin{tabular}[c]{c||*{4}{c@{ $\;$}}}
      $\begin{array}[r]{c@{\hspace{0.5em}}r}
        & k\\[-1.9ex]
        m
      \end{array}\hspace{-0.1em}$ & $3$ & $4$ & $5$ \\
      \hline\hline
      $0$ & 27 & 51 & 80\\
      \hline
      $1$ & 40 & 60 & 86\\
      $2$ & 41 & 63\\
      $3$ & 42\\
    \end{tabular}
  \end{center}
\end{enumerate}
Note that by Conjecture \ref{conj:onecomp} we would have in every row only polynomial growth. Now in Corollary \ref{cor:waezpoly} we will prove that in every column we have \emph{linear growth}, where actually the factor can be made as close to $1$ as one wishes, when only $m$ is big enough.

\subsection{A general perspective on Ramsey theory}
\label{sec:Ramseygeneral}

We consider a sequence $(G_n)_{n \in \NN}$ of finite hypergraphs, where we assume that we have $V(G_n) \sse V(G_{n+1})$ and $E(G_n) \sse E(G_{n+1})$ for all $n$. Furthermore we assume $V(G_1) \not= \es$ and $\fa\, n \in \NN : \es \notin E(G_n)$ for simplicity. Such a sequence of hypergraphs we call \emph{nontrivial monotonic}. We consider the following questions:
\begin{enumerate}[(i)]
\item Does there exist some $n \in \NN$ with $E(G_n) \not= \es$ ?
\item Does for every $m \in \NN$ exists some $n_0(m) \in \NN$ such that for all $n \ge n_0$ the hypergraph $G_n$ is not $m$-colourable? In this case we say that $(G_n)_{n \in \NN}$ has the \emph{Ramsey property}.
\item Does $\lim_{n \ra \infty} \frac{\alpha(G_n)}{\abs{V(_n)}} = 0$ hold, where $\alpha(G)$ for a hypergraph $G$ is the \emph{independence number} of $G$, the maximum size of an independent vertex set (not containing any hyperedge)? In this case we say that $(G_n)_{n \in \NN}$ has the \emph{Szemer\'edi property}.
\end{enumerate}
Clearly (ii) implies (i), while in turn (iii) implies (ii), since colouring a hypergraph $G$ with $m$ colours just means to partition $V(G)$ into at most $m$ independent subsets. Considering the original vdW-problem, we have $G_n = \arithp(k,n)$ for some fixed $k \in \NN$, where $V(\arithp(k,n)) = \tb 1n$, while $E(\arithp(k,n))$ is the set of arithmetic progressions of size $k$ in $\tb 1n$. Property (i) trivially holds, while property (ii) is van der Waerden's theorem. And property (iii) has been conjectured by Erd\"os and Tur\'an in 1936 (\cite{ErdoesTuran1936vdW}), and was finally proved by Szemer\'edi in his landmark paper \cite{Szemeredi1975AP} (for arbitrary $k$, one of the deepest results in combinatorics; for $k=3$ it was proven in \cite{Roth1953vdW}, for $k=4$ in \cite{Szemeredi1969AP}).

Consider a nontrivial monotonic sequence $G = (G_n)_{n \in \NN}$ of hypergraphs. The following definition generalises diagonal vdW-numbers, and introduces a form of ``convergence rate'' capturing the Szemer\'edi property.
\begin{definition}\label{def:genvdwN}
  For $m \in \NN$ let $\bmm{\gwaez_m(G)} \in \NN \cup \set{+\infty}$ be the infimum of $n \in \NN$ such that $G_n$ is not $m$-colourable. And for $q \in \RR_{>0}$ let $\bmm{\gcr(G,q)} \in \NN \cup \set{+\infty}$ be the infimum of $n \in \NN$ such that for all $n' \ge n$ holds $\frac{\alpha(G_{n'})}{\abs{V(G_{n'})}} < q$.
\end{definition}
Thus $G$ has the Ramsey property iff for all $m \in \NN$ we have $\gwaez_m(G) < + \infty$, while $G$ has the Szemer\'edi property iff for all $q \in \ioa 01$ we have $\gcr(G,q) < +\infty$. Considering the sequence $(\arithp(k,n))_{n \in \NN}$ of vdW-hypergraphs of arithmetic progressions of size $k$ we have $\gwaez_m(\arithp(k,-)) = \waez mk$. The following simple fact makes the above remark, that (iii) implies (ii), more precise.
\begin{lemma}\label{lem:genupb}
  For all $m \in \NN$ we have $\gwaez_m(G) \le \gcr(G,\frac 1m)$.
\end{lemma}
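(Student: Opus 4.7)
The plan is to argue by contraposition via a pigeonhole step at the candidate value $n_0 := \gcr(G, \tfrac{1}{m})$. If $n_0 = +\infty$ the inequality is trivial, so I assume $n_0 < +\infty$. By the definition of $\gcr$, since the infimum over a subset of $\NN$ is attained (when finite), the defining property holds at $n_0$ itself, so in particular
\[
\alpha(G_{n_0}) < \tfrac{1}{m}\,\abs{V(G_{n_0})}.
\]

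First I would recall that an $m$-colouring of a hypergraph $G_{n_0}$ is exactly a partition $V(G_{n_0}) = V_1 \addcup \dots \addcup V_m$ into (at most) $m$ independent sets, as noted in the text just before Definition \ref{def:genvdwN}. Assuming for contradiction that $G_{n_0}$ admits such an $m$-colouring, a pigeonhole argument yields some $i \in \tb{1}{m}$ with
\[
\abs{V_i} \;\ge\; \tfrac{1}{m}\,\abs{V(G_{n_0})}.
\]
Since $V_i$ is independent, $\alpha(G_{n_0}) \ge \abs{V_i} \ge \tfrac{1}{m}\abs{V(G_{n_0})}$, contradicting the strict inequality above. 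Hence $G_{n_0}$ is not $m$-colourable, which by the definition of $\gwaez_m$ gives $\gwaez_m(G) \le n_0 = \gcr(G,\tfrac{1}{m})$.

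There is essentially no obstacle; the only subtlety is making sure the Szemer\'edi-style condition, which is stated as a property of a \emph{threshold} (``for all $n' \ge n$''), is indeed active at $n = n_0$ itself. This follows because $\gcr(G,\tfrac{1}{m})$ is defined as an infimum over $\NN$, and any finite infimum in $\NN$ is a minimum, so the condition holds at the infimum. Once this point is clear, the rest is the one-line pigeonhole argument above, and no use of monotonicity of $(G_n)_{n\in\NN}$ (beyond what is built into the definitions) is needed.
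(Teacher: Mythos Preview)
Your proof is correct and matches the paper's reasoning. The paper does not give an explicit proof of this lemma; it introduces it as ``the following simple fact'' and relies on the sentence just before Definition~\ref{def:genvdwN} (that an $m$-colouring is exactly a partition into at most $m$ independent sets), which is precisely the pigeonhole argument you wrote out. Your care about the infimum being attained in $\NN$ is appropriate and the only point worth making explicit.
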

We say that hypergraph sequences $G^1, \dots, G^m$ are \emph{compatible} if for all $n$ we have $V(G^1_n) = \dots = V(G^m_n)$. Generalising the notion of ``diagonal vdW-like numbers'' in Definition \ref{def:genvdwN} and the notion of mixed vdW-numbers in Definition \ref{def:waez}:
\begin{definition}\label{def:ggenvdewN}
  Consider $m \in \NN$ and compatible nontrivial monotonic hypergraph sequences $G^1, \dots, G^m$. Then $\bmm{\gwaez_m(G^1, \dots, G^m)} \in \NN \cup \set{+\infty}$ is defined as the infimum of $n \in \NN$ such that for every $m$-colouring of $V(G^1_n)$ there exists some $i \in \tb 1m$ such that some hyperedge of $G^i_n$ is monochromatically $i$-coloured.
\end{definition}
Obviously we have $\gwaez_m(G) = \gwaez_m(G, \dots, G)$. Call $(G^1, \dots, G^m)$ \emph{horizontally monotonic} if for all $n \in \NN$ and all $1 \le i \le j \le m$ every independent subset of $G^i_n$ is also independent in $G^j_n$. In this case then $\gwaez_m(G^1, \dots, G^m) \le \gwaez_m(G_m)$ holds. This captures the typical application of ``mixed numbers'' from Ramsey theory. Generalising the notion of ``convergence rate'' in Definition \ref{def:genvdwN}:
\begin{definition}\label{def:ggcr}
  Consider $m \in \NN$ and compatible nontrivial monotonic hypergraph sequences $G^1, \dots, G^m$. For $q \in \RR_{>0}$ let $\bmm{\gcr((G_1,\dots,G_m),q)} \in \NN \cup \set{+\infty}$ be the infimum of $n \in \NN$ such that for all $n' \ge n$ and for all $m$-tuples $(S_1, \dots, S_m)$ of (pairwise) disjoint independent subsets $S_i$ of $G^i_{n'}$ we have $\frac{\abs{S_1} + \dots + \abs{S_m}}{n'} < q$.
\end{definition}
A few basic remarks:
\begin{enumerate}
\item\label{item:ggcr1} $\gcr(G,q) = \gcr((G),q)$.
\item\label{item:ggcr2} $\gcr((G^1, \dots, G^m), q) \le \gcr(G_m, \frac qm)$ if $(G^1, \dots, G^m)$ is horizontally monotonic.
\item\label{item:ggcr3} By definition we have for arbitrary compatible nontrivial monotonic hypergraph sequences that $\gwaez_m(G^1, \dots, G^m) = \gcr((G^1, \dots, G^m), 1)$. By Remark \ref{item:ggcr2}) this generalises Lemma \ref{lem:genupb}.
\end{enumerate}
For complete hypergraphs we can easily establish the Szemer\'edi property:
\begin{lemma}\label{lem:gcrvoll}
  For $n, k \in \NN$ let $V^k_n$ be the hypergraph with vertex set $\tb 1n$ and hyperedge set $\binom{\tb 1n}{k}$. Now for natural numbers $k_1, \dots, k_m$ and $q > 0$ we have that $\gcr((V^{k_1}_{-}, \dots, V^{k_m}_{-}), q)$ is the smallest $n > \frac{(\sum_{i=1}^m k_i)-m}{q}$. Especially we have $\gcr((V^2, \dots, V^2), q) = \gcr(V^{m+1},q)$, which is the smallest $n > \frac mq$.
\end{lemma}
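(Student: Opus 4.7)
The plan is to compute, for each fixed $n'$, the maximum of $\abs{S_1} + \dots + \abs{S_m}$ over all $m$-tuples of pairwise disjoint independent sets $S_i$ in $V^{k_i}_{n'}$, and then to read off when this maximum, divided by $n'$, is strictly less than $q$.

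First I would observe that an independent set in $V^k_{n'}$ is exactly a subset of $\tb 1{n'}$ of cardinality at most $k-1$, since every $k$-subset of $\tb 1{n'}$ is a hyperedge. Combined with the disjointness requirement this gives
\[
  \abs{S_1} + \dots + \abs{S_m} \le \min\bigl(n',\, (k_1-1) + \dots + (k_m-1)\bigr) = \min\bigl(n',\, (k_1 + \dots + k_m) - m\bigr),
\]
and I would verify that both bounds are attained: for $n' \ge (k_1 + \dots + k_m) - m$ pick disjoint $S_i \sse \tb 1{n'}$ with $\abs{S_i} = k_i - 1$; for $n' \le (k_1 + \dots + k_m) - m$ greedily partition $\tb 1{n'}$ itself into at most $m$ blocks with $\abs{S_i} \le k_i - 1$.

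Next I would solve the resulting inequality from Definition \ref{def:ggcr}. Writing $K := (k_1 + \dots + k_m) - m$ for brevity: for $n' \ge K$ the maximum ratio equals $K/n'$, so strict inequality $< q$ is equivalent to $n' > K/q$; for $n' < K$ the maximum ratio equals $1$, violating $< q$ whenever $q \le 1$. Since $q \le 1$ gives $K/q \ge K$, the small-$n'$ regime is automatically excluded by $n' > K/q$. Hence the least admissible $n$ is precisely the smallest natural number strictly exceeding $((k_1 + \dots + k_m) - m)/q$. The concluding ``especially''-statement then follows by substitution: plugging $k_1 = \dots = k_m = 2$ into the just-derived formula yields the threshold $(2m - m)/q = m/q$, which coincides with the threshold $((m+1) - 1)/q = m/q$ produced by $\gcr(V^{m+1}, q)$ via Definition \ref{def:genvdwN}.

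The only mildly delicate point, and the one thing I would be careful about, is the case-split between the two regimes of $n'$: one must be sure that no $n' < K$ can sneak in and violate the ratio bound at some larger $n$. This is handled uniformly by the trivial estimate $K/q \ge K$ for $q \le 1$; once independent sets in $V^k_{n'}$ are identified with the subsets of size $< k$, the underlying combinatorics is immediate, so no further obstacle remains.
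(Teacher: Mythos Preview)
The paper does not actually give a proof of this lemma; it is simply stated as easy. Your argument is correct and is the natural one: identify independent sets in $V^k_{n'}$ with subsets of size at most $k-1$, observe that the maximum of $\sum_i |S_i|$ over disjoint such families is $\min(n',K)$ with $K = \sum_i k_i - m$, and read off the threshold from the resulting inequality.

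One minor point worth flagging: your final case analysis explicitly restricts to $q \le 1$, and this is in fact necessary --- for $1 < q \le K$ the formula in the lemma is not literally correct (every ratio is then at most $1 < q$, so the true value of $\gcr$ is $1$, whereas the smallest integer exceeding $K/q$ can be larger). This is a slight looseness in the paper's statement rather than a defect in your argument; only the range $q \in (0,1)$ is used downstream (Lemma~\ref{lem:genupb} and Theorem~\ref{thm:polygen}).
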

By definition we get the following generalisation of Remark \ref{item:ggcr2} to Definition \ref{def:ggcr}:
\begin{lemma}\label{lem:gcradd}
  For $m \in \NN$ consider compatible nontrivial monotonic hypergraph sequences $G^1, \dots, G^m$, and consider $1 \le t < m$. Then for $p, q \in \RR_{>0}$ we have
  \begin{displaymath}
    \gcr((G^1, \dots, G^m), p+q) \le \max \big( \gcr((G^1, \dots, G^t), p), \, \gcr((G^{t+1}, \dots, G^m), q) \big).
  \end{displaymath}
\end{lemma}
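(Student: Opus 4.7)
The plan is to unfold the definition of $\gcr$ and use additivity of the inequalities. Set
\[
n_0 := \max\bigl(\gcr((G^1, \dots, G^t), p),\; \gcr((G^{t+1}, \dots, G^m), q)\bigr),
\]
and show directly that $n_0$ satisfies the defining property of an upper bound for $\gcr((G^1, \dots, G^m), p+q)$, from which the stated inequality follows by the infimum characterisation.

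Concretely, I would fix an arbitrary $n' \ge n_0$ together with an arbitrary $m$-tuple $(S_1, \dots, S_m)$ of pairwise disjoint independent subsets $S_i \sse V(G^i_{n'})$. The key observation is that the sub-tuple $(S_1, \dots, S_t)$ is itself a $t$-tuple of pairwise disjoint independent subsets of the compatible sequences $G^1, \dots, G^t$ at stage $n'$, and analogously $(S_{t+1}, \dots, S_m)$ is an $(m-t)$-tuple of pairwise disjoint independent subsets of $G^{t+1}, \dots, G^m$ at stage $n'$. Since $n' \ge \gcr((G^1, \dots, G^t), p)$, Definition \ref{def:ggcr} yields $\frac{|S_1| + \dots + |S_t|}{n'} < p$; likewise $n' \ge \gcr((G^{t+1}, \dots, G^m), q)$ gives $\frac{|S_{t+1}| + \dots + |S_m|}{n'} < q$.

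Adding the two strict inequalities yields $\frac{|S_1| + \dots + |S_m|}{n'} < p + q$, which is exactly the defining condition for $\gcr((G^1, \dots, G^m), p+q) \le n_0$. The degenerate cases where one (or both) of the two terms in the $\max$ equals $+\infty$ are handled by the convention that the maximum is then $+\infty$ and the inequality is vacuous; the case that both are finite is the interesting one treated above.

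There is no genuine obstacle here: the lemma is a pure book-keeping statement about how independent-set density splits between two groups of colours, and the only thing one needs is that independence in $G^i_{n'}$ and pairwise disjointness of the $S_i$ are preserved under passing to sub-tuples, which is immediate from the definitions. The verification is therefore short and does not rely on any structural property of the hypergraph sequences beyond compatibility and nontrivial monotonicity (both of which are inherited by the sub-sequences).
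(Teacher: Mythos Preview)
Your argument is correct and is precisely the unfolding of Definition~\ref{def:ggcr} that the paper has in mind; indeed the paper gives no explicit proof at all, merely remarking that the lemma holds ``by definition''. Your treatment of the splitting into sub-tuples, the addition of the two strict inequalities, and the degenerate $+\infty$ cases is exactly what is needed and matches the intended (trivial) argument.
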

Lemma \ref{lem:gcrvoll} and Lemma \ref{lem:gcradd} (splitting $1 = \frac 1s + (1 - \frac 1s)$) together yield the basic theoretical observation of this paper:
\begin{theorem}\label{thm:polygen}
  Consider $l \in \NN$ and compatible nontrivial monotonic hypergraph sequences $G^1, \dots, G^l$. For $n \in \NN$ let $V_n := V(G^1_n)$, and for $k \in \NN$ let $Q_n^k := (V_n, \binom{V_n}k)$, and thus $Q^k = (Q^k_n)_{n \in \NN}$ is a nontrivial monotonic hypergraph sequence. For $x \in \RR$ let $M(x) \in \NN \cup \set{+\infty}$ be the infimum of $n \in \NN$ such that we have $\abs{V_n} > x$. Now for every $s \in \RR$ with $s > 1$ and for every $m \in \NNZ$ we have
  \begin{gather*}
    \gwaez_{l+m}(G^1, \dots, G^l, Q^2, \dots, Q^2) = \gwaez_{l+1}(G^1, \dots, G^l, Q^{m+1}) \le\\
    \max \Big (M(s \cdot m), \, \gcr \big((G^1, \dots, G^l), 1 - \frac 1s \big) \Big ).
  \end{gather*}
\end{theorem}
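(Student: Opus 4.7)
The plan is to establish the stated equality by a direct combinatorial reformulation, then rewrite the right-hand $\gwaez$-quantity as a $\gcr$-quantity via Remark \ref{item:ggcr3}, and finally apply Lemmas \ref{lem:gcradd} and \ref{lem:gcrvoll} to extract the bound.

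First I would observe that a subset $S \sse V_n$ is independent in $Q^k_n$ if and only if $|S| < k$, since the hyperedges of $Q^k_n$ are precisely the $k$-subsets of $V_n$. Consequently, for any $(l+m)$-partition $(S_1, \dots, S_{l+m})$ of $V_n$ whose last $m$ classes are independent in $Q^2_n$ (equivalently, each has size at most $1$), the union $S' := S_{l+1} \cup \dots \cup S_{l+m}$ has size at most $m$ and is therefore independent in $Q^{m+1}_n$; conversely, every independent $S' \sse V(Q^{m+1}_n)$ decomposes into at most $m$ singletons (padded by empty sets). This correspondence between ``bad'' colourings for the two sides yields the equality $\gwaez_{l+m}(G^1, \dots, G^l, Q^2, \dots, Q^2) = \gwaez_{l+1}(G^1, \dots, G^l, Q^{m+1})$.

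Next I would use Remark \ref{item:ggcr3} to rewrite the right-hand side as $\gcr((G^1, \dots, G^l, Q^{m+1}), 1)$, and apply Lemma \ref{lem:gcradd} at split position $t = l$ with the decomposition $1 = (1 - \tfrac{1}{s}) + \tfrac{1}{s}$, obtaining
\begin{displaymath}
  \gcr((G^1, \dots, G^l, Q^{m+1}), 1) \le \max\bigl( \gcr((G^1, \dots, G^l), 1 - \tfrac{1}{s}), \, \gcr((Q^{m+1}), \tfrac{1}{s}) \bigr).
\end{displaymath}

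Finally, the single-sequence term $\gcr((Q^{m+1}), \tfrac{1}{s})$ is computed by the same argument as in Lemma \ref{lem:gcrvoll}, the only adaptation being that the vertex set of $Q^{m+1}_{n'}$ is $V_{n'}$ rather than $\tb{1}{n'}$: the maximal independent set has size $\min(m, |V_{n'}|)$, so the required inequality for all sufficiently large $n'$ is $m / |V_{n'}| < \tfrac{1}{s}$, equivalently $|V_{n'}| > sm$. By the monotonicity $|V_n| \le |V_{n+1}|$ inherent in a nontrivial monotonic sequence, this yields $\gcr((Q^{m+1}), \tfrac{1}{s}) = M(sm)$, and substituting closes the proof. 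There is no real obstacle; the only point worth spelling out carefully is the passage from the index $n'$ in Definition \ref{def:ggcr} to the vertex-set size $|V_{n'}|$ in the general setting, which is exactly what the function $M$ is designed to track.
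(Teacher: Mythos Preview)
Your proposal is correct and follows essentially the same route as the paper, which simply states that the theorem follows from Lemma~\ref{lem:gcrvoll} and Lemma~\ref{lem:gcradd} via the splitting $1 = \tfrac{1}{s} + (1 - \tfrac{1}{s})$; you have merely filled in the details (the direct combinatorial argument for the equality, the invocation of Remark~\ref{item:ggcr3}, and the adaptation of Lemma~\ref{lem:gcrvoll} from $\tb{1}{n}$ to $V_n$) that the paper leaves implicit.
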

So the growth-rate of $m \mapsto \gwaez_{l+m}(G^1, \dots, G^l, Q^2, \dots, Q^2)$ is linear for $m$ large enough, where the factor can be made arbitrarily close to $1$. Applied to vdW-numbers, using Szemer\'edi's theorem, we get the following application (the proof-idea here originated from Jan-Christoph Schlage-Puchta). As a special case of Definition \ref{def:ggcr} we use $\crarithp(t, q)$ for parameter tuples $t$, using the hypergraph sequences belonging to the progression sizes in $t$.
\begin{corollary}\label{cor:waezpoly}
  For a parameter tuple $t$ of length $l \in \NN$, for $m \in \NNZ$ and for $s \in \RR_{>1}$ we have $\waez{m+l}{(2, \dots, 2); t} \le \max(s \cdot m + 1, \crarithp(t, 1 - \frac 1s))$.
\end{corollary}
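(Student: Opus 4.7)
The plan is to recognise the corollary as a direct instantiation of Theorem \ref{thm:polygen}. I would write $t = (k_1, \ldots, k_l)$ and set $G^i := \arithp(k_i, -)$ for $i = 1, \ldots, l$; these form compatible nontrivial monotonic hypergraph sequences with common vertex set $V_n = \tb 1n$. By the convention fixed just before the statement of the corollary one has $\gcr((G^1, \ldots, G^l), q) = \crarithp(t, q)$ for every $q > 0$, so the second argument of the max in the target inequality matches the second argument of the max in Theorem \ref{thm:polygen} on the nose.

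The crucial small observation to make is that an arithmetic progression of size $2$ in $\tb 1n$ is just an arbitrary pair of distinct elements, hence $\arithp(2, n) = Q^2_n$ in the notation of Theorem \ref{thm:polygen}. Combined with the fact that the quantity $\gwaez_{l+m}$ from Definition \ref{def:ggenvdewN} is invariant under simultaneous permutation of its hypergraph arguments with the colour labels (the monotonicity demanded of parameter tuples in Definition \ref{def:waez} only fixes how we write the tuple down, not the value of the resulting number), this yields the identification
\begin{displaymath}
  \waez{m+l}{(2,\ldots,2);\,t} \;=\; \gwaez_{l+m}(G^1, \ldots, G^l, Q^2, \ldots, Q^2)
\end{displaymath}
with $m$ copies of $Q^2$ on the right-hand side.

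At this point Theorem \ref{thm:polygen} applies directly and bounds the right-hand side by $\max\bigl(M(s\cdot m),\, \crarithp(t, 1 - \tfrac{1}{s})\bigr)$. Since $\abs{V_n} = n$, the quantity $M(s\cdot m)$ is simply the smallest integer strictly exceeding $s\cdot m$, which is at most $s\cdot m + 1$; substituting this estimate produces exactly the inequality asserted in the corollary. The main ``obstacle'', such as it is, is purely bookkeeping: one has to check that the ``$2$''s may be moved to the appropriate side of the concatenation and that the order of arguments inside $\gwaez$ is immaterial, both of which are immediate from permutation-invariance of Definition \ref{def:ggenvdewN}. No further input (in particular, no invocation of Szemer\'edi's theorem) is required for the proof of the inequality itself; that theorem is only needed downstream to guarantee that $\crarithp(t, 1 - \tfrac{1}{s})$ is finite and hence that the resulting bound is non-vacuous.
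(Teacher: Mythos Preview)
Your proposal is correct and follows exactly the route the paper takes: the corollary is obtained as a direct instantiation of Theorem~\ref{thm:polygen} with $G^i = \arithp(k_i,-)$, using that $\arithp(2,n) = Q^2_n$ and that $\abs{V_n} = n$ gives $M(s\cdot m) \le s\cdot m + 1$. Your remark that Szemer\'edi's theorem is needed only to ensure finiteness of $\crarithp(t,1-\tfrac 1s)$ (and hence non-vacuity of the bound), rather than for the inequality itself, is also accurate and matches the paper's intent.
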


Giving up on the factor, but now without unknown minimal value for $m$, we have the following variation on Corollary \ref{cor:waezpoly}:
\begin{lemma}\label{lem:waezpolys}
  For a parameter tuple $t$ of length $l \in \NN$ and for $m \in \NNZ$ we have $\waez{m+l}{(2, \dots, 2); t} \le (m+1) \cdot \waez{l}{t}$.
\end{lemma}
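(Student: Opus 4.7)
The plan is a direct pigeonhole argument. Set $N := \waez{l}{t}$ and consider an arbitrary colouring $f: \tb{1}{(m+1)N} \to \tb{1}{m+l}$. I want to show that some colour class contains an arithmetic progression of the required size. The key observation is that an arithmetic progression of size $2$ is just any pair of elements, so if any of the first $m$ colour classes contains at least two elements, we are already done.

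So I may assume that each of the colours $1, \dots, m$ is used at most once. Let $S \subseteq \tb{1}{(m+1)N}$ be the set of positions assigned one of these first $m$ colours; then $|S| \le m$. Removing $S$ leaves the set $\tb{1}{(m+1)N} \sm S$ entirely coloured by the last $l$ colours $m+1, \dots, m+l$. The set $S$ cuts $\tb{1}{(m+1)N}$ into at most $m+1$ maximal intervals, whose total length is at least $(m+1)N - m$; by pigeonhole, one such interval $I = \iaa{a+1}{a+N'}$ has length $N' \ge \lceil ((m+1)N-m)/(m+1) \rceil = N$.

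Now translate: the map $x \mapsto x - a$ sends $I$ bijectively onto $\tb{1}{N'}$ and preserves arithmetic progressions. The induced colouring $g: \tb{1}{N'} \to \tb{1}{l}$ (relabelling colour $m+i$ as $i$) exists because no element of $I$ carries one of the first $m$ colours. Since $N' \ge N = \waez{l}{t}$, the definition of the vdW-number yields an $i \in \tb{1}{l}$ such that $g^{-1}(i)$ contains an arithmetic progression of size $t_i$. Translating back gives an arithmetic progression of size $t_i$ in $f^{-1}(m+i)$, which is the required monochromatic progression for colour $m+i$ in the original colouring.

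There is no real obstacle: the only thing to check carefully is the pigeonhole step that a contiguous interval of length $\ge N$ survives after deleting $\le m$ points from $\tb{1}{(m+1)N}$, and the bookkeeping that the tuple $(2,\dots,2);t$ indexes colours in the order consistent with the above split (colours $1,\dots,m$ for the $2$'s, colours $m+1,\dots,m+l$ for $t$).
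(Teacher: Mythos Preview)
Your argument is correct and follows essentially the same route as the paper: reduce to the case where the $m$ ``size-$2$'' colours are each used at most once, delete those $\le m$ positions, and use pigeonhole to find a surviving interval of length $\ge \waez{l}{t}$, then apply translation invariance. The paper states this in two sentences (simply asserting that after removing any $S$ with $|S|\le m$ an interval of the required length remains), while you spell out the colouring setup and the pigeonhole arithmetic; the only place to tighten slightly is noting that the ratio $((m+1)N-s)/(s+1)$ is decreasing in $s$, so the worst case really is $s=m$ (or equivalently, one may enlarge $S$ to size exactly $m$ without loss of generality).
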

\begin{proof}
  Let $n := (m+1) \cdot \waez{l}{t}$. Now for any $S \sse \tb 1n$ with $\abs{S} \le m$ the set $\tb 1n \sm S$ contains at least one interval $\tb ij$ for $1 \le i \le j \le n$ with $j - i + 1 = \waez lt$. Using the invariance of linear progressions under translation, we obtain the desired inequality. \qed
\end{proof}

\subsection{Arithmetic progressions in the prime numbers}
\label{sec:appn}

We turn to a major strengthening of Szemer\'edi's theorem. Now the hypergraph sequence is given as $G_n = \arithpp(k,n)$ for fixed $k \in \NN$, where the vertex set of $\arithpp(k,n)$ is the set of the first $n$ prime numbers, while the hyperedges are the arithmetic progressions of size $k$ (within the first $n$ prime numbers). As before, every set of prime numbers having at most two elements is an arithmetic progression, but now the first arithmetic progression of size $3$ is $\set{3,5,7}$, and the first arithmetic progression of size $4$ is $\set{5,11,17,23}$. Until 2004 even condition (i) was unknown, that is, whether the primes contain arbitrarily long arithmetic progressions, and only with \cite{GreenTao2005Primes} not only condition (i) was proven, but even condition (iii) (the underlying preprint was a major contribution towards the Fields medal for Terence Tao in 2006). Actually, until today no other proof of property (i) is known than through property (iii)! In analogy to Definition \ref{def:waez}, and as a special case of Definition \ref{def:ggenvdewN}, we define \emph{Green-Tao numbers} (``GT-numbers'').
\begin{definition}\label{def:gtz}
  For a parameter tuple $(k_1, \dots, k_m)$ let the \textbf{Green-Tao number} \bmm{\gtz{m}{k_1, \dots, k_m}} be defined as the smallest $n_0 \in \NN$ such that for every $n \ge n_0$ and every $f: \set{p_1, \dots, p_n} \ra \tb 1m$, where $p_1, \dots, p_n$ are the first $n$ prime numbers, there exists some $i \in \tb 1m$ such that $f^{-1}(i)$ contains an arithmetic progression of size $k_i$.

  According to Definition \ref{def:corep} we speak of \textbf{trivial GT-numbers}, \textbf{simple GT-numbers}, \textbf{core GT-numbers}, \textbf{transversal GT-numbers}, \textbf{extended core GT-num\-bers}, and \textbf{diagonal GT-numbers}.
\end{definition}
Theorem \ref{thm:polygen} applied to Green-Tao numbers, using Green-Tao's theorem (\cite{GreenTao2005Primes}), yields that extended GT-numbers grow linearly. For the explicit statement, as for Corollary \ref{cor:waezpoly} and as a special case of Definition \ref{def:ggcr}, we use $\crarithpp(t, q)$ for parameter tuples $t$, using the hypergraph sequences in the primes belonging to the progression sizes in $t$.
\begin{corollary}\label{cor:gtzpoly}
   For a parameter tuple $t$ of length $l \in \NN$, for $m \in \NNZ$ and for $s \in \RR_{>1}$ we have $\gtz{m+l}{(2, \dots, 2); t} \le \max(s \cdot m + 1, \crarithpp(t, 1 - \frac 1s))$.
\end{corollary}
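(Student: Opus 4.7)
The plan is to invoke Theorem~\ref{thm:polygen} in the prime-number setting, exactly paralleling how Corollary~\ref{cor:waezpoly} was obtained in the natural-number setting. First I would take $t = (k_1, \ldots, k_l)$ and let $G^i$ be the nontrivial monotonic hypergraph sequence with $V(G^i_n) = \set{p_1, \dots, p_n}$ (the first $n$ primes) and $E(G^i_n)$ the set of arithmetic progressions of size $k_i$ inside $\set{p_1, \dots, p_n}$, for $i = 1, \dots, l$. These sequences are compatible since they share the vertex sets $V_n := \set{p_1, \dots, p_n}$, and since $\abs{V_n} = n$, the function $M(x)$ from Theorem~\ref{thm:polygen} is just the smallest $n \in \NN$ with $n > x$, so in particular $M(s \cdot m) \le s \cdot m + 1$ for $s > 1$ and $m \in \NNZ$.

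Next I would unfold Definitions~\ref{def:gtz} and~\ref{def:ggenvdewN} to identify
\begin{displaymath}
  \gwaez_{l+m}(G^1, \ldots, G^l, Q^2, \ldots, Q^2) \;=\; \gtz{m+l}{(2, \ldots, 2); t},
\end{displaymath}
where $Q^2_n$ is the complete $2$-uniform hypergraph on $V_n$ (so a monochromatic hyperedge in $Q^2_n$ is the same as two equally-coloured vertices, i.e.\ a monochromatic arithmetic progression of size~$2$). The permutation of colours reordering the 2-parts to the front is harmless, since the defining condition in Definition~\ref{def:ggenvdewN} treats the colours symmetrically with respect to their paired hypergraphs. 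By the notational convention in the paper, $\gcr((G^1, \ldots, G^l), 1 - \tfrac 1s) = \crarithpp(t, 1 - \tfrac 1s)$.

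Finally I would appeal to Green-Tao's theorem (\cite{GreenTao2005Primes}): it asserts that the sequence of arithmetic progressions of any fixed size in the primes satisfies the Szemer\'edi property, which in the language of Definition~\ref{def:ggcr} means $\crarithpp(t, q) < + \infty$ for every $q \in \ioa 01$. Thus the right-hand side of Corollary~\ref{cor:gtzpoly} is finite, and applying Theorem~\ref{thm:polygen} with the above choice of $G^1, \ldots, G^l$ and with $M(s \cdot m) \le s \cdot m + 1$ yields the desired bound $\gtz{m+l}{(2, \ldots, 2); t} \le \max(s \cdot m + 1, \crarithpp(t, 1 - \tfrac 1s))$.

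The main obstacle, and the only nontrivial input, is the use of the Green-Tao theorem to guarantee finiteness of $\crarithpp(t, q)$; everything else is a bookkeeping exercise: verifying compatibility and nontrivial monotonicity of the sequences, computing $M$ for the prime-indexed vertex sets, and matching the parameter-tuple sorting convention of Definition~\ref{def:gtz} with the symmetric formulation of Definition~\ref{def:ggenvdewN}. Since Theorem~\ref{thm:polygen} has already absorbed the combinatorial content (via Lemmas~\ref{lem:gcrvoll} and~\ref{lem:gcradd}), no further calculation is needed.
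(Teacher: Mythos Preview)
Your proposal is correct and follows exactly the route the paper takes: the corollary is stated as a direct specialisation of Theorem~\ref{thm:polygen} to the hypergraph sequences $\arithpp(k_i,-)$, with $\abs{V_n}=n$ giving $M(s\cdot m)\le s\cdot m+1$, and Green--Tao's theorem invoked to ensure the bound is finite (the paper gives no further detail beyond this). One minor remark: the bare inequality already holds without Green--Tao (both sides may be $+\infty$), so strictly speaking Green--Tao is needed only to make the bound meaningful rather than to establish the inequality itself---but this matches the paper's own phrasing.
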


\subsection{Remarks on transversal numbers and transversal extensions}
\label{sec:remarkstrans}

Given a nontrivial monotonic hypergraph sequence $G$, the (computational) determination of the simplest transversal extension numbers $\gwaez_{1+m}(G, Q^2, \dots, Q^2) = \gwaez_2(G, Q^{m+1})$ (recall Theorem \ref{thm:polygen}), with the special cases $\waez{m+1}{2,\dots,2,k}$ and $\gtz{m+1}{2,\dots,2,k}$, is relatively(!) easy, since essentially we have to compute the \emph{transversal numbers} $\tau(G_n)$ of the hypergraphs $G_n$ (though still an NP-complete task in general), that is the minimum size of a set of vertices having non-empty intersection with every hyperedge. This is also the motivation for the notion of ``transversal $N$-number'' and ``transversal extension'': $\gwaez_2(G, Q^{m+1})$ is the smallest $n$ such that $\tau(G_n) > m$. The complements of independent sets in a hypergraph $G$ are exactly the transversals of $G$, and thus $\tau(G) + \alpha(G) = \abs{V(G)}$ holds. So determination of the transversal numbers for the hypergraph sequence $G$ determines the convergence rate w.r.t.\ the Szemer\'edi property, and is therefore of strong interest (recall Lemma \ref{lem:genupb}).

Considering the computation of $\tau(G_n)$ for the vdW- and the GT-sequence of hypergraphs, going from $G_n$ to $G_{n+1}$ only one vertex is added, and thus we have a relatively slow growth of complexity compared to transversal extensions of core tuples, without a clear boundary of what becomes ``infeasible''. These problems also require some special treatment (using cardinality constraints or special hypergraph transversal algorithms). So we put the results on transversal vdW- or GT-numbers only into the appendix (see Section \ref{sec:Transversalnumbers}), where we used the most direct method for computing transversal numbers of hypergraphs via SAT (see the introduction to Section \ref{sec:Transversalnumbers}).
\begin{itemize}
\item Special methods are applicable regarding the transversal numbers of vdW-hypergraphs, which exploit the translation invariance of arithmetic progressions; see \cite{Wagstaff1967ArithProg,Wagstaff1972ArithProg} for the basic ideas, and see the case $k=3$ in Subsection \ref{sec:transvdwnumbers} for data derived by such special methods. \cite{LandmanRobertsonCulver2005vanderWaerden} even found for ``small'' $m$ a precise formula for $\waez{m+1}{2,\dots,2,k}$.
\item Regarding $\gtz{m+1}{2,\dots,2,k}$, such a (simple) formula likely does not exist, and also we loose translation invariance of the arithmetic progressions (since they must lie in the primes), so computing the minimum size of hypergraph transversals via SAT solving seems a good option, but still should (and can) exploit special properties (not investigated in this paper).
\item It seems that combining these special methods with SAT solving should yield the best results.
\end{itemize}
Finally we mention that also transversal extensions of core tuples can be translated into SAT problems by combining the general translation methods of the following section with cardinality constraints (which take care of the initial tuples of $2$'s). In this article we concentrated on the foundations and on the study of the various general translation schemes, so also the investigations of this special treatment had to be postponed (that is, the GT-numbers for transversal extensions reported in Section \ref{sec:compGreenTao} have been obtained by just applying the general translations of non-boolean problems into boolean problems).

\section{The generic translation scheme from non-boolean clause-sets to boolean clause-sets}
\label{sec:transnbb}

GT-problems of the form ``$\gtz{2}{k_1,k_2} > n$ ?'' have a natural formulation as (boolean) SAT problems by just excluding the arithmetic progressions of sizes $k_1$ and $k_2$, e.g.\ the problem ``$\gtz{2}{2,3} > 4$ ?'' yields the (satisfiable) clause-set $\setb{\set{2,3},\set{2,5},\set{2,7},\set{3,5},\set{3,7},\set{5,7},\set{-3,-5,-7}}$ over the variable-set $\set{2,3,5,7}$ (thus the answer is ``yes''). A natural translation for arbitrary $m$ is given when using \emph{generalised clause-sets} as systematically studied in \cite{Kullmann2007ClausalFormECCC2,Kullmann2007ClausalFormZI,Kullmann2007ClausalFormZII}, which allow variables $v$ with finite domains $D_v$ and literals of the form ``$v \not= \ve$'' for values $\ve \in D_v$. The problem of colouring a hypergraph $G$ with $m$ colours is naturally translated into a SAT problem for generalised clause-sets via using $m$ clauses for every hyperedge $H \in E(G)$, namely for every value $\ve \in \tb 1m$ the clause $\set{v \not= \ve : v \in H}$, stating that not all vertices in $H$ can have value $\ve$ (note that the vertices of $G$ are used as variables with (uniform) domain $\tb 1m$). Accordingly we arrive at the natural generalisation $\Fgt{k_1, \dots, k_m}{n}$ of the boolean formulation, using as variables the first $n$ prime numbers, each with domain $\tb 1m$, where the clauses are obtained from the hyperedges of $\arithpp(k_i,n)$ for $i \in \tb 1m$ by using literals ``$v \not= i$''.

As a running example consider $m=3$, $k_1=k_2=k_3=3$ and $n=5$. We remark that we have $\gtz{3}{3} = 137$, as can be seen in Section \ref{sec:compGreenTao}. Only one hypergraph needs to be considered here (since all $k_i$-values coincide), namely $\arithpp(3,5) = (\set{2,3,5,7,11}, \set{\set{3,5,7},\set{3,7,11}})$. Now the (non-boolean) clause-set $\Fgt{3,3,3}{5}$ uses the five (formal\footnote{note that variable $2$ does not occur here; it occurs only for $k_i = 2$, and one could ignore it in general, however then we always had to use the offset $1$ when comparing with prime number tables}) variables $2,3,5,7,11$, each with domain $\set{1,2,3}$, while we have $3 \cdot 2 = 6$ clauses (each of length $3$), namely the clauses $\set{(3,i),(5,i),(7,i)}, \set{(3,i),(7,i),(11,i)}$ for $i \in \set{1,2,3}$.

In \cite{Kullmann2007ClausalFormECCC2} the \emph{nested translation} from generalised clause-sets to boolean clause-sets was introduced, while the generalisation to the \emph{generic translation scheme} is outlined in \cite{Kullmann2007ClausalFormZII}. Given a generalised clause-set $F$, for every variable an (arbitrary) unsatisfiable boolean clause-set $T(v)$ is chosen, such that for different variables these clause-sets are variable-disjoint. Furthermore for every value $\ve \in D_v$ a necessary clause $\gamma_v(\ve) \in T(v)$ is chosen (that is, $T(v) \sm \set{\gamma_v(\ve)}$ is satisfiable), such that to different values different clauses are assigned. Now the translation $T_{\gamma}(F)$ of $F$ under $T$ and $\gamma$ replaces for every clause $C \in F$ the (non-boolean) literals $v \not= \ve$ by the (boolean) literals in clause $\gamma_v(\ve)$, and adds for every variable $v \in \var(F)$ the clauses of the (boolean) clause-set $T(v) \sm \set{\gamma_v(\ve) : \ve \in D_v}$. The clauses $\gamma_v(\ve)$ are called the \emph{main clauses} of $T(v)$, while the other clauses of $T(v)$ constitute the \emph{remainder}.

\begin{lemma}\label{lem:corrTrans}
  $T_{\gamma}(F)$ is satisfiability-equivalent to $F$.
\end{lemma}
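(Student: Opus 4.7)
My plan is to establish satisfiability-equivalence by constructing explicit mutual correspondences between satisfying non-boolean assignments of $F$ and satisfying boolean assignments of $T_\gamma(F)$, exploiting exactly the three defining features of the generic translation scheme: injectivity of the main-clause assignment $\ve \mapsto \gamma_v(\ve)$, necessity of each main clause $\gamma_v(\ve)$ in $T(v)$, and variable-disjointness of the family $(T(v))_{v \in \var(F)}$.

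For the forward direction, given a satisfying non-boolean assignment $\vp$ of $F$, I set $\ve_v^* := \vp(v)$ for each $v \in \var(F)$ and, invoking necessity, pick a boolean assignment $\psi_v$ satisfying $T(v) \sm \set{\gamma_v(\ve_v^*)}$. Variable-disjointness then lets me glue these into a single boolean assignment $\psi := \bigcup_v \psi_v$. It remains to check that $\psi$ satisfies $T_\gamma(F)$: each remainder clause already lies inside some $T(v) \sm \set{\gamma_v(\ve_v^*)}$, and for a translated clause coming from $C \in F$ I pick a literal $(v \not= \ve) \in C$ satisfied by $\vp$ (so $\ve \not= \ve_v^*$); injectivity then forces $\gamma_v(\ve) \in T(v) \sm \set{\gamma_v(\ve_v^*)}$, so $\psi_v$ satisfies $\gamma_v(\ve)$, and since its literals are included in the translated clause so does $\psi$.

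For the reverse direction I plan to argue symmetrically. Given a satisfying boolean assignment $\psi$ of $T_\gamma(F)$, for each $v \in \var(F)$ the assignment $\psi$ satisfies all remainder clauses of $T(v)$; since $T(v)$ is unsatisfiable, $\psi$ must falsify at least one main clause $\gamma_v(\ve)$, and I let $\vp(v)$ be any such $\ve$. For any $C \in F$ the translated clause is satisfied by $\psi$, which forces some $\gamma_{v_i}(\ve_i)$ with $(v_i \not= \ve_i) \in C$ to be satisfied; this $\ve_i$ therefore differs from $\vp(v_i)$ by construction, so $\vp$ satisfies the literal $v_i \not= \ve_i$ and hence $C$.

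The only real subtlety --- and this is certainly not an obstacle --- will be careful bookkeeping: injectivity of $\ve \mapsto \gamma_v(\ve)$ has to be invoked at precisely the step where $\gamma_v(\ve)$ must lie outside $\set{\gamma_v(\ve_v^*)}$, and variable-disjointness is needed both when assembling the per-variable boolean assignments into a single $\psi$ and, symmetrically, when choosing $\vp(v)$ independently for each $v$. The defining properties of the generic translation scheme have been set up precisely so that these two translations go through, so I expect no combinatorial hurdle to arise beyond verifying the construction.
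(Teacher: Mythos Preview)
Your proof is correct and follows essentially the same approach as the paper's own proof: in both directions you and the paper construct the correspondence between satisfying assignments in the same way, using necessity of the main clauses to obtain the per-variable boolean assignments $\psi_v$, variable-disjointness to glue them, and injectivity of $\gamma_v$ at exactly the point where a satisfied main clause must differ from the falsified one. Your write-up is simply more explicit about which hypothesis is invoked at each step, whereas the paper compresses the verification into two sentences.
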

\begin{proof}
  If $\vp$ is a satisfying assignment for $F$, then for every variable $v \in \var(\vp)$ choose a satisfying assignment $\psi_v$ of $T(v) \sm \set{\gamma_v(\vp(v))}$, and the union of these (compatible) assignments $\psi_v$ yields a satisfying assignment for $T_{\gamma}(F)$ (here it is used that for $\ve \in D_v \sm \set{\vp(v)}$ we have $\gamma_v(\ve) \not= \gamma_v(\vp(v))$). If on the other hand $\psi$ is a satisfying (total) assignment for $T_{\gamma}(F)$, then for every clause-set $T(v)$ there exists some $\ve_v \in D_v$ such that the clause $\gamma_v(\ve)$ is falsified by $\psi$; now the assignment $v \mapsto \ve_v$ satisfies $F$. \qed
\end{proof}
The seven instances of the generic scheme used in this paper, where the domain of variable $v$ is $\tb 1m$, and where the boolean variables are $v_i$ for appropriate indices $i$, are as follows:
\begin{enumerate}
\item $T(v) = D_m := \setb{\set{v_1}, \dots, \set{v_m}, \set{\ol{v_1}, \dots, \ol{v_m}}}$ with $m$ variables is used for the \emph{weak direct translation}, where $\gamma_v(i) := \set{v_i}$. $D_m$ is a marginal minimally unsatisfiable clause-set\footnote{See \cite{Kullmann2007HandbuchMU} for an overview on minimally unsatisfiable clause-sets.} with deficiency $1$ (that is, with $m+1$ clauses). The \emph{strong direct translation} uses $T(v) = D'_m := D_m \cup \set{ \set{v_i, v_j} : 1 \le i < j \le m }$ and the same $\gamma_v$.\footnote{In \cite{Pre09HBSAT} the ``strong direct translation'' is called ``direct encoding'', starting from arbitrary CSP-problems (while we start from generalised clause-sets). We prefer to distinguish between ``encodings'', which are about variables and the mapping of assignments, and ``translations'', which concern the whole process, and which can use quite different but semantically equivalent clause-sets for example. For the direct translation it seems that always the strong form is better, but this is not the case for other translations, and so we explicitely distinguish between ``weak'' and ``strong''.}
\item The \emph{weak reduced translation} uses $m-1$ variables with $T(v) = D_{m-1}$ and an arbitrary bijection $\gamma_v$ (note that $D_{m-1}$ has $m$ clauses), while the \emph{strong reduced translation} uses the same $\gamma_v$ and $T(v) = D'_{m-1}$. Different from the direct translations, here $\gamma_v$ plays a role now, namely the question is to which value one associates the long clause $\set{\ol{v_1}, \dots, \ol{v_{m-1}}}$, and so we have $m$ (essentially) different choices.

  Note that clause-set $D_{m-1}$ can be obtained from $D_m$ by DP-reduction for variable $v_m$ (replacing all clauses containing variable $v_m$ by their resolvents on $v_m$), and accordingly from a clause-set translated by the (weak/strong) direct translation we obtain the clause-set translated by the (weak/strong) reduced translation by performing DP-reduction on all such variables $v_m$ (using that the remainder-clauses are just used as they are, without additional literals in them).
\item The \emph{weak nested translation} uses $m-1$ variables and $T(v) = H_{m-1}$, where
  \begin{displaymath}
    H_m := \setb{ \set{v_1}, \set{\ol{v_1}, v_2}, \dots, \set{\ol{v_1}, \dots, \ol{v_{m-1}}, v_m}, \set{\ol{v_1}, \dots, \ol{v_m}}},
  \end{displaymath}
  using some arbitrary bijection $\gamma_v$ (note that $H_m$ has deficiency $1$, and thus $H_{m-1}$ has $m$ clauses). $H_m$ is up to isomorphism the unique saturated minimally unsatisfiable Horn clause-set with $m$ variables, and in fact is a saturation of the minimally unsatisfiable clause-set $D_m$ (see \cite{Kullmann2007HandbuchMU}). The \emph{strong nested translation} uses the same $\gamma_v$, and, similar to the strong direct translation, $T(v) = H'_{m-1} := H_{m-1} \cup \set{ \set{v_i, v_j} : 1 \le i < j \le m-1 }$. For both forms now we have $m!/2$ (essentially) different choices for $\gamma_v$ (note that only the two clauses of length $m$ in $H_m$ can be mapped to each other by an isomorphism of $H_m$). The motivation for the introduction of the weak nested translation in \cite{Kullmann2007ClausalFormECCC2,Kullmann2007ClausalFormZII} was that first the number of clauses is not changed by the translation, that is, $T(v)$ is minimally unsatisfiable (also $D_{m-1}$ fulfils this), and second that $T(v)$ is a hitting clause-set, that is, every pair of different clauses clashes in at least one variable. These two requirements ensure that the conflict structure of the original (non-boolean) clause-set is preserved by the (boolean) translation. Instead of using $H_{m-1}$ one could actually use any unsatisfiable hitting clause-set with $m$ clauses here.
\item The \emph{simple logarithmic translation}\footnote{called the ``log encoding'' (for CSP-problems) in \cite{Pre09HBSAT}} considers the smallest natural number $p$ with $2^p \ge m$, and sets $T(v) = A_p$, where $A_p$ consists of all $2^p$ full clauses over variables $v_1, \dots, v_p$, while $\gamma_v$ is an arbitrary injection.\footnote{If $2^p = m$, then there is (essentially) only one choice for $\gamma_v$, however otherwise the situation is more complicated, and also resolutions are possible between the remaining clauses, shortening these clauses, and these shortened clauses can be used to shorten the main clauses. Therefore we speak of the ``simple'' translation, and further investigations are needed to find stronger schemes when $m < 2^p$.}
\end{enumerate}

With the exception of the direct translation, which is fully symmetric in the clauses $\gamma_v(\ve)$, one has to decide about the choice $\gamma_v$ of necessary clauses. With the exception of the simple logarithmic translation this is the choice of a suitable bijection, i.e., a question of ordering the values of the variables. In this initial study we have chosen a ``standard ordering'', with the aim of minimising the size of the clause-set, by simply assigning the larger clauses to the larger $k$-values (since the larger the size of arithmetic progressions the fewer there are). Considering our running example $\Fgt{3,3,3}{5}$ we obtain the following $7$ translations:
\begin{enumerate}
\item For the direct encoding we get $5 \cdot 3 = 15$ boolean variables $v_{p,i}$ for $p \in \set{2,3,5,7}$ and $i \in \set{1,2,3}$. The clause $\set{(3,i),(5,i),(7,i)}$ is replaced by $\set{v_{3,i},v_{5,i},v_{7,i}}$ for $i \in \set{1,2,3}$, while clause $\set{(3,i),(7,i),(11,i)}$ is replaced by $\set{v_{3,i},v_{7,i},v_{11,i}}$. For the weak translation we have the $5$ additional clauses $\set{\ol{v_{p,1}},\ol{v_{p,2}},\ol{v_{p,3}}}$ for $p \in \set{2,3,5,7,11}$, while for the strong translation additionally we have the $5 \cdot \binom 32 = 15$ binary clauses $\set{v_{p,i},v_{p,j}}$ for $p \in \set{2,3,5,7,11}$ and $i, j \in \set{1,2,3}$, $i < j$.
\item For the reduced encoding we get $5 \cdot 2 = 10$ boolean variables $v_{p,i}$ for $p \in \set{2,3,5,7,11}$ and $i \in \set{1,2}$. The clause $\set{(3,i),(5,i),(7,i)}$ is replaced by $\set{v_{3,i},v_{5,i},v_{7,i}}$ for $i \in \set{1,2}$ resp.\ by $\set{\ol{v_{3,1}},\ol{v_{3,2}},\ol{v_{5,1}},\ol{v_{5,2}},\ol{v_{7,1}},\ol{v_{7,2}}}$ for $i=3$, while clause $\set{(3,i),(7,i),(11,i)}$ is replaced by $\set{v_{3,i},v_{7,i},v_{11,i}}$ for $i \in \set{1,2}$ resp.\ by $\set{\ol{v_{3,1}},\ol{v_{3,2}},\ol{v_{7,1}},\ol{v_{7,2}},\ol{v_{11,1}},\ol{v_{11,2}}}$ for $i=3$. For the weak translation there are no additional clauses, while for the strong translation we have $5 \cdot \binom 22 = 5$ additional binary clauses $\set{v_{p,1},v_{p,2}}$ for $p \in \set{2,3,5,7,11}$.

  Note that due to our standardisation scheme the long replacement-clause is uniformly used for $i=3$, while actually for each of the five (non-boolean) variables $2,3,5,7,11$ one could use a different $i \in \set{1,2,3}$.
\item For the nested encoding we also get $5 \cdot 2 = 10$ boolean variables $v_{p,i}$ for $p \in \set{2,3,5,7,11}$ and $i \in \set{1,2}$. The clause $\set{(3,i),(5,i),(7,i)}$ is replaced for $i=1,2,3$ by respectively $\set{v_{3,1},v_{5,1},v_{7,1}}$, $\set{\ol{v_{3,1}},v_{3,2},\ol{v_{5,1}},v_{5,2},\ol{v_{7,1}},v_{7,2}}$, $\set{\ol{v_{3,1}},\ol{v_{3,2}},\ol{v_{5,1}},\ol{v_{5,2}},\ol{v_{7,1}},\ol{v_{7,2}}}$, while clause $\set{(3,i),(7,i),(11,i)}$ for $i=1,2,3$ is replaced by respectively $\set{v_{3,1},v_{7,1},v_{11,1}}$, $\set{\ol{v_{3,1}},v_{3,2},\ol{v_{7,1}},v_{7,2},\ol{v_{11,1}},v_{11,2}}$, $\set{\ol{v_{3,1}},\ol{v_{3,2}},\ol{v_{7,1}},\ol{v_{7,2}},\ol{v_{11,1}},\ol{v_{11,2}}}$. For the weak translation there are no additional clauses, while for the strong translation we have $5 \cdot \binom 22 = 5$ additional binary clauses $\set{v_{p,1},v_{p,2}}$ for $p \in \set{2,3,5,7,11}$.

  Note (again) that due to our standardisation scheme the order of the three replacement-clauses is fixed for each variable, while for each variable one could use one of the $3! = 6$ possible orders.
\item Finally, for the logarithmic encoding we get (again, but here this is just an exception) $5 \cdot 2 = 10$ boolean variables $v_{p,i}$ for $p \in \set{2,3,5,7,11}$ and $i \in \set{1,2}$. We use the order $A_2 = \setb{\set{v_1,v_2}, \set{\ol{v_1},v_2}, \set{\ol{v_1},\ol{v_2}}, \set{v_1,\ol{v_2}}}$, where the first three clauses are used for the values $i=1,2,3$. Then the clause $\set{(3,i),(5,i),(7,i)}$ is replaced for $i=1,2,3$ by $\set{v_{3,1},v_{3,2},v_{5,1},v_{5,2},v_{7,1},v_{7,2}}$, $\set{\ol{v_{3,1}},v_{3,2},\ol{v_{5,1}},v_{5,2},\ol{v_{7,1}},v_{7,2}}$, $\set{\ol{v_{3,1}},\ol{v_{3,2}},\ol{v_{5,1}},\ol{v_{5,2}},\ol{v_{7,1}},\ol{v_{7,2}}}$ respectively, and $\set{(3,i),(7,i),(11,i)}$ is\, replaced \, resp.\ \, by \, $\set{v_{3,1},v_{3,2},v_{7,1},v_{7,2},v_{11,1},v_{11,2}}$, \\$\set{\ol{v_{3,1}},v_{3,2},\ol{v_{7,1}},v_{7,2},\ol{v_{11,1}},v_{11,2}}$, $\set{\ol{v_{3,1}},\ol{v_{3,2}},\ol{v_{7,1}},\ol{v_{7,2}},\ol{v_{11,1}},\ol{v_{11,2}}}$. Addi\-ti\-onal\-ly we have the $5$ clauses $\set{v_{p,1},\ol{v_{p,2}}}$ for $p \in \set{2,3,5,7,11}$.
\end{enumerate}

Somewhat surprisingly, in many cases considered in this paper the weak nested translation turned out to be best (from the above 7 translations considered), for all three types of solvers, look-ahead, conflict-driven and local-search solvers (where for the latter an appropriate algorithm has to be chosen). Only for larger number of colours is the logarithmic translation superior (for local search, with various best algorithms; complete solvers were not successful on any of these instances (with larger number of colours)), while in all cases the weak nested translation was superior over the direct translation (weak or strong, for all solver types).

\section{Computing Green-Tao numbers}
\label{sec:compGreenTao}

For trivial GT-numbers as with vdW-numbers we have $\gtz m2 = m+1$. However the simple GT-numbers are non-trivial: $\gtz 1k$ is the smallest $n$ such that the first $n$ prime numbers contain an arithmetic progression of size $k$. Only the values for $2 \le k \le 21$ are known, given by the sequence $2,4,9,10,37,155,263,289,316,\\ 21'966, 23'060, 58'464, 2'253'121, 9'686'320, 11'015'837, 227'225'515, 755'752'809,\\ 3'466'256'932,  22'009'064'470, 220'525'414'079$.\footnote{This data is available at \url{http://users.cybercity.dk/~dsl522332/math/aprecords.htm}, in the form of the prime numbers themselves, not their indices (as used by us), and so we needed to rank the prime numbers there.} It seems likely that consideration of GT-numbers for core tuples involving $k \ge 11$ is infeasible (since the first $21966$ prime numbers need to be considered just to see the \emph{first} progression of size $11$).

Considering the generalised clause-sets $F = \Fgt{k_1,\dots,k_m}{n}$ (recall Section \ref{sec:transnbb}), the number $n(F)$ of (formal) variables is $n$, while the number $c(F)$ of clauses is $\sum_{i=1}^m \abs{E(\arithpp(k_i,n))}$.\footnote{It seems that for $n$ not much smaller than $\gtz{m}{k_1,\dots,k_m}$ all variables actually occur with the exception of prime number $2$, which occurs iff some $k_i = 2$ exists.} So to compute the number of clauses in $F$, we have to compute how many arithmetic progressions of size $k$ there are for a given $n$. In other words, what can be said about the number $\abs{E(\arithpp(k,n))}$ of hyperedges in the GT-hypergraphs? Exploiting the famous (unproven) ``$m$-tuples conjecture'' of Hardy-Littlewood, various asymptotic formulas (where the quotient with the true value is approaching $1$ with $n$ going to infinity) are given in \cite{GrosswaldHagis1979ProgressionsPrimes}. Translated into our context, where we rank the primes, for arbitrary $N \in \NNZ$ the formula (7) from \cite{GrosswaldHagis1979ProgressionsPrimes} yields, using $x := n \cdot \log n$ (which is an asymptotically precise formula to translate from the rank $n$ to the associate $n$-th prime number $p_n$):
\begin{displaymath}
  \abs{E(\arithpp(k,n))} \sim C_k \cdot \frac{x^2}{(\log x)^k} \cdot (1 + \sum_{i=1}^N \frac{a_{k,i}}{(\log x)^i})
\end{displaymath}
(for fixed $k$, proven meanwhile for $k \le 4$; precise formulas for $C_k$ and the $a_i$ are also given in \cite{GrosswaldHagis1979ProgressionsPrimes}). Just using linear regression to determine $C_k$ and the $a_{k,i}$, using $N=2$, yields very good approximations over the ranges we are considering.

Solvers used are the algorithms from the \texttt{Ubcsat} local-search suite (\cite{TompkinsHoos2004Ubcsat}), \texttt{minisat2} (\cite{EenSoerensson2003Minisat}) for conflict-driven solvers (on our instances either \texttt{minisat2} was superior or not much worse than all other publicly available conflict-driven solvers, and thus it seems that the optimisations applied to \texttt{minisat2} in other solvers don't improve performance on our instances), and \texttt{OKsolver-2002} (\cite{Ku2002h}), \texttt{march\_pl} (\cite{Heule2008PhD}) and \texttt{satz} (\cite{Li1999Satz}) for look-ahead solvers. In one (largest) case \texttt{survey propagation} (\cite{BraunsteinMezardZecchina2005SurveyProp}) was successful (with $708206$ clauses of length $5$). If not stated otherwise, for all non-boolean cases the weak (standard) nested translation is best (considering complete and incomplete solvers), and if not otherwise stated, for lower bounds \texttt{rnovelty+} is best. Recall that a lower bound stated as ``$\ge n$'' means that we conjecture that actually equality holds.

We were able to compute five core GT-numbers, for $3$ core numbers we have reasonable conjectures, and for $9$ core numbers we have hopefully not unreasonable lower bounds. Furthermore we were able to compute $12$ extended core GT-numbers, while for $16$ cases we have conjectures. Transversal GT-numbers are presented in Section \ref{sec:transgtnumbers}; see Subsection \ref{sec:remarkstrans} for general remarks.
\begin{enumerate}
\item $4$ binary core GT-numbers $\gtz{2}{a,b}$ are known:
  \begin{center}
    \begin{tabular}[c]{c||*{5}{c@{ $\;$}}}
      $\begin{array}[r]{c@{\hspace{0.5em}}r}
        & b\\[-1.9ex]
        a
      \end{array}\hspace{-0.1em}$ & $3$ & $4$ & $5$ & $6$ & $7$\\
      \hline\hline
      $3$ & $23$ & $79$ & $528$ & $\ge 2072$ & $> 13800$\\
      $4$ & - & 512 & $> 4231$\\
      $5$ & - & - & $\ge 34309$
    \end{tabular}\\[1ex]
    For $(5,5)$ we experienced the only case where \texttt{survey propagation} was successful (converging for $n < 34309$, diverging for $n \ge 34309$). For the other lower bounds \texttt{adaptnovelty+} is best. \texttt{OKsolver-2002} is best for $(4,4)$, while for $(3,5)$ \texttt{minisat2} is best, followed by \texttt{march\_pl}.
  \end{center}
\item One ternary core GT-number $\gtz{3}{a,b,c}$ is known:
  \begin{center}
    \begin{tabular}[c]{c||*{3}{c}}
      $\begin{array}[r]{c@{\hspace{0.5em}}r}
        & c\\[-1.9ex]
        a,b
      \end{array}\hspace{-0.1em}$ & $3$ & $4$ & $5$\\
      \hline\hline
      $3,3$ & $137$ & $\ge 434$ & $> 1989$\\
      $3,4$ & - & $> 1662$ & $> 8300$\\
      $4,4$ & - & $> 5044$
    \end{tabular}\\[1ex]
    For $(3,3,3)$ the logarithmic translation performed best, with \texttt{minisat2} fastest, followed by \texttt{OKsolver-2002}. For $(3,4,5)$ \texttt{rnovelty} performed best.
  \end{center}
\item No core GT-number $\gtz{4}{a,b,c,d}$ is known:
  \begin{center}
    \begin{tabular}[c]{c||*{2}{c}}
      $\begin{array}[r]{c@{\hspace{0.5em}}r}
        & d\\[-1.9ex]
        a,b,c
      \end{array}\hspace{-0.1em}$ & $3$ & $4$\\
      \hline\hline
      $3,3,3$ & $> 384$ & $> 1052$\\
      $3,3,4$ & - & $> 2750$
    \end{tabular}
  \end{center}
\item Extending $(3,3)$ by $m$ $2$'s, i.e., the numbers $\gtz{m+2}{2, \dots, 2, 3,3}$:
  \begin{center}
    \hspace*{-2em}
    \begin{tabular}[c]{c||c|*{14}{c@{ $\;$}}}
      $m$ & $0$ & $1$ & $2$ & $3$ & $4$ & $5$ & $6$ & $7$ & $8$ & $9$ & $10$ & $11$ & $12$ & $13$ & $14$\\
      \hline\hline
      & 23 & 31 & 39 & 41 & 47 & $53$ & $55$ & $\ge 60$ & $\ge 62$ & $\ge 67$ & $\ge 71$ & $\ge 73$ & $\ge 82$ & $\ge 83$ & $\ge 86$
    \end{tabular}\\[1ex]
    \texttt{minisat2} is the best complete solver here (also for the other (complete) cases below). For the lower bounds the logarithmic translation is best, with \texttt{rsaps} except for $m=13$ where \texttt{walksat-tabu} without null-flips is best.
  \end{center}
\item Extending $(3,k)$ for $k \ge 4$ by $m$ $2$'s, i.e., the numbers $\gtz{m+2}{2, \dots, 2, 3,k}$:
  \begin{center}
    \hspace*{-2em}
    \begin{tabular}[c]{c||c|*{7}{c@{ $\;$}}}
      $\begin{array}[r]{c@{\hspace{0.5em}}r}
        & m\\[-1.9ex]
        k
      \end{array}\hspace{-0.1em}$ & $0$ & $1$ & $2$ & $3$ & $4$ & $5$ & $6$\\
      \hline\hline
      $4$ & 79 & 117 & 120 & 128 & $136$ & $\ge 142$ & $\ge 151$\\
      $5$ & 528 & $581$ & $\ge 582$ & $\ge 610$
    \end{tabular}\\[1ex]
    For $k=5$, $m=2$ \texttt{saps} is best, and for $m=3$ \texttt{walksat}. For $k=4$, $m=6$ \texttt{walksat-tabu} with the logarithmic translation is best.
  \end{center}
\item Extending $(4,k)$ by $m$ $2$'s, i.e., the numbers $\gtz{m+2}{2, \dots, 2, 4,k}$:
  \begin{center}
    \begin{tabular}[c]{c||c|*{3}{c@{ $\;$}}}
      $\begin{array}[r]{c@{\hspace{0.5em}}r}
        & m\\[-1.9ex]
        k
      \end{array}\hspace{-0.1em}$ & $0$ & $1$ & $2$ \\
      \hline\hline
      $4$ & 512 & $\ge 553$ & $> 588$
    \end{tabular}\\[1ex]
    \texttt{sapsnr} is best (for the lower bounds).
  \end{center}
\item Extending $(3,3,k)$ by $m$ $2$'s, i.e., the numbers $\gtz{m+3}{2, \dots, 2, 3,3,k}$:
  \begin{center}
    \begin{tabular}[c]{c||c|*{4}{c@{ $\;$}}}
      $\begin{array}[r]{c@{\hspace{0.5em}}r}
        & m\\[-1.9ex]
        k
      \end{array}\hspace{-0.1em}$ & $0$ & $1$ & $2$ \\
      \hline\hline
      $3$ & 137 & $151$ & $\ge 154$\\
      $4$ & $\ge 434$ & $\ge 453$ & $> 471$
    \end{tabular}
  \end{center}
\end{enumerate}

Some final remarks:
\begin{enumerate}
\item For vdW-numbers, Conjecture \ref{conj:onecomp} generalised to all core tuples says that in every row of a table of core numbers (that is, in a core tuple one component grows while the others are fixed) we would have growth-rates $O(n^k)$ (where $k$ depends on the row). Now for GT-numbers a first guess is that we have growth-rates $O(\exp(n^k))$.
\item For vdW-numbers, in \cite{LandmanRobertson2003ArithmeticProgressions}, Research Problem 2.8.6, it is conjectured that $\waez{2}{k,k} \ge \waez{2}{k-1,k+1} \ge \waez{2}{k-2,k+2} \ge \dots \ge \waez{2}{2,2 k - 2}$. The sequences $N(k,k), N(k-1,k+1), \dots, N(2,2k+2)$ can for $N = \waez{2}{}$ be reasonably evaluated for $2 \le k \le 6$, yielding the sequences 
  \begin{displaymath}
    (3), (9,7), (35,22,11), (178,73,46,15), (1132,\text{?},146,77,19),
  \end{displaymath}
  which supports the conjecture. For GT-numbers (that is, $N = \gtz{2}{}$) we can reasonably evaluate $2 \le k \le 4$, obtaining the sequences
  \begin{displaymath}
    (3), (23,14), (79,528,55),
  \end{displaymath}
  and we see that now we have a more complicated behaviour.
\end{enumerate}

\section{Open problems and outlook}
\label{sec:summary}

Regarding the generic translation scheme, further extensive experimentation is needed w.r.t.\ the problem of ordering the values and of mixing translation schemes (recall that every variable can be treated on its own). Also further instances of the generic scheme need to considered, starting with refining the logarithmic translation when the number of values is not a power of $2$. Of course, finally some form of understanding needs to be established, and we hope that the generic scheme offers a suitable environment for such considerations.

As mentioned in Subsection \ref{sec:remarkstrans}, the translation of transversal extension problems into boolean SAT can use cardinality constraints, and this needs to be explored systematically. This includes the special case of transversal extensions of simple tuples, which is basically the hypergraph transversal problem (for these special hypergraphs).

A fundamental problem is to improve performance on \emph{unsatisfiable} instances (of complete solvers). The most promising general approach seems to us to systematically study the optimisation of heuristics as outlined in \cite{Kullmann2007HandbuchTau}. Investigating the tree-resolution and full-resolution complexity of these instances should be of great interest; we noticed that especially with the \texttt{OKsolver-2002} the search trees show remarkable regularities (of a number-theoretical touch, in a kind of ``fractal'' way). Exploiting the monotone nature of the hypergraph sequences of vdW- or GT-hypergraphs seems also necessary to reach the next level of vdW- or GT-numbers (regarding core parameter tuples), where some first (sporadic) methods one finds in \cite{KourilPaulW26}.

In general, it seems to us that instances from Ramsey theory, like vdW-instances or GT-instances as considered in this paper, or like the Ramsey-instances (and there are many other families), provide very good benchmarks for SAT solvers, combining the power of systematic creation as for random instances with various types of ``structures'', where the interplay between these structures and SAT solving should be of great interest and potential.

\begin{appendix}

\section{Transversal numbers}
\label{sec:Transversalnumbers}

A generic way of computing transversal vdW-numbers and transversal GT-numbers via SAT-solvers is as follows, using $N \in \set{\waez{}{}, \gtz{}{}}$ and respectively $G_k(n) = \arithp(k,n)$ or $G_k(n) = \arithpp(k,n)$.
\begin{itemize}
\item Let $\tau_k(n) := \tau(G_k(n))$.
\item Recall that $N_{m+1}(2,\dots,2,k)$ is the smallest $n$ with $\tau_k(n) > m$.
\item So we compute the numbers $\tau_k(n)$ for $n = 1, 2, \dots$ as far as we get, and derive from these transversal numbers the transversal $N$-numbers.
\item We start with $\tau_k(1) = 0$ (for $k>1$), and we know 
  \begin{displaymath}
    \tau_k(n+1) \in \set{\tau_k(n),\tau_k(n)+1}.
  \end{displaymath}
\item So for computing $\tau_k(n+1)$ we consider the satisfiability problem 
  \begin{displaymath}
    \tau_k(n+1) = b \text{ ?}
  \end{displaymath}
for $b := \tau_k(n)$: If this problem is satisfiable (a transversal for $G_k(n+1)$ of size $b$ exists), then we have $\tau_k(n+1) = b$, while otherwise we have $\tau_k(n+1) = b+1$.
\item To formulate the satisfiability problem as a (boolean) CNF-SAT-problem, we consider the vertices of $G_k(n+1)$ as variables $v_1, \dots, v_{n+1}$ and the hyperedges as positive clauses, and we add clauses expressing the cardinality constraint ``$v_1 + \dots + v_{n+1} = b$'' (considering $v_i \in \set{0,1} \subset \NNZ$).
\end{itemize}
The best combination of SAT-solver and cardinality-constraint-translation we found uses \texttt{minisat2} and binary addition. Except for $\waez{m+1}{2,\dots,2,k}$ all numbers are computed in this way.

\subsection{Transversal vdW-numbers}
\label{sec:transvdwnumbers}

Now we present the transversal vdW-numbers $\waez{m+1}{2,\dots,2,k}$ we have computed. Numbers in boldface are given by the (precise) formula in \cite{LandmanRobertsonCulver2005vanderWaerden}, while for underlined numbers the formula still holds though the $m$-value is not in the domain of proven correctness.

\begin{enumerate}
\item The known numbers $\waez{m+1}{2,\dots,2,3}$, with $m = a \cdot 10 + b$:
    \begin{center}
    \begin{tabular}[c]{c||*{10}{c}}
      $\begin{array}[r]{c@{\hspace{0.5em}}r}
        & b\\[-1.9ex]
        a
      \end{array}$ & $0$ & $1$ & $2$ & $3$ & $4$ & $5$ & $6$ & $7$ & $8$ & $9$\\
      \hline\hline
      $0$ & \textbf{3} & \textbf{6} & \ul{7} & \ul{8} & 10 & 12 & 15 & 16 & 17 & 18\\
      $1$ & 19 & 21 & 22 & 23 & 25 & 27 & 28 & 29 & 31 & 33\\
      $2$ & 34 & 35 & 37 & 38 & 39 & 42 & 43 & 44 & 45 & 46\\
      $3$ & 47 & 48 & 49 & 50 & 52 & 53 & 55 & 56 & 57 & 59\\
      $4$ & 60 & 61 & 62 & 64 & 65 & 66 & 67 & 68 & 69 & 70\\
      $5$ & 72 & 73 & 75 & 76 & 77 & 78 & 79 & 80 & 81 & 83\\
      $6$ & 85 & 86 & 87 & 88 & 89 & 90 & 91 & 93 & 94 & 96\\
      $7$ & 97 & 98 & 99 & 101 & 102 & 103 & 105 & 106 & 107 & 108\\
      $8$ & 109 & 110 & 112 & 113 & 115 & 116 & 117 & 118 & 119 & 120\\
      $9$ & 123 & 124 & 125 & 126 & 127 & 128 & 129 & 130 & 131 & 132\\
      $10$ & 133 & 134 & 135 & 136 & 138 & 139 & 140 & 141 & 142 & 143\\
      $11$ & 144 & 146 & 147 & 148 & 149 & 151 & 152 & 153 & 154 & 155\\
      $12$ & 156 & 158 & 159 & 160 & 161 & 162 & 164 & 166 & 167 & 168\\
      $13$ & 170 & 171 & 172 & 173 & 175 & 176 & 177 & 178 & 179 & 180\\
      $14$ & 181 & 182 & 183 & 184 & 185 & 186 & 187 & 188 & 189 & 190\\
      $15$ & 191 & 192 & 193
    \end{tabular}
  \end{center}
  Here via SAT solving only up to $\tau(\arithp(3,101)) = 74$, $\tau(\arithp(3,102)) = 75$ could be computed (yielding $\waez{74+1}{2,\dots,2,3} = 102$), while the data for $m > 74$ has been collected by Jarek Wroblewski at \url{http://www.math.uni.wroc.pl/~jwr/non-ave.htm}, using special methods. The data is available in the form of an ``$\alpha$-steplist'', that is, for index $i = 1, 2, 3, \dots$ the smallest $a_i = n \in \NN$ with $\alpha(\arithp(3,n)) = i$ is given. Jarek Wroblewski conjectures that $a_i \le i^{1.5}$. This data for $i = b \cdot 10 + c$ is given by the following table.
    \begin{center}
    \begin{tabular}[c]{c||*{10}{c}}
      $\begin{array}[r]{c@{\hspace{0.5em}}r}
        & c\\[-1.9ex]
        b
      \end{array}$ & $1$ & $2$ & $3$ & $4$ & $5$ & $6$ & $7$ & $8$ & $9$ & $10$\\
      \hline\hline
      $0$ & 1 & 2 & 4 & 5 & 9 & 11 & 13 & 14 & 20 & 24\\
      $1$ & 26 & 30 & 32 & 36 & 40 & 41 & 51 & 54 & 58 & 63\\
      $2$ & 71 & 74 & 82 & 84 & 92 & 95 & 100 & 104 & 111 & 114\\
      $3$ & 121 & 122 & 137 & 145 & 150 & 157 & 163 & 165 & 169 & 174\\
      $4$ & 194
    \end{tabular}
  \end{center}
  We would get such sequence of these numbers also directly from the runs of the SAT-solvers (as discussed above, however only as far as we get), by collecting all the $n$ for which a satisfiable instance was obtained (which means that the transversal number didn't change, which is equivalent to the independence numbers making a step (of $+1$)).
\item The known numbers $\waez{m+1}{2,\dots,2,4}$, with $m = a \cdot 10 + b$:
  \begin{center}
    \begin{tabular}[c]{c||*{10}{c}}
      $\begin{array}[r]{c@{\hspace{0.5em}}r}
        & b\\[-1.9ex]
        a
      \end{array}$ & $0$ & $1$ & $2$ & $3$ & $4$ & $5$ & $6$ & $7$ & $8$ & $9$\\
      \hline\hline
      $0$ & \textbf{4} & \textbf{7} & \textbf{11} & 12 & 14 & 16 & 18 & 20 & 22 & 24\\
      $1$ & 26 & 29 & 31 & 32 & 35 & 36 & 38 & 39 & 41 & 42\\
      $2$ & 44 & 46 & 47 & 49 & 51 & 52 & 55 & 56 & 57 & 59\\
      $3$ & 61 & 62 & 63 & 65 & 67 & 69 & 71 & 72 & 73 & 75\\
      $4$ & 76 & 78 & 80
    \end{tabular}
  \end{center}
\item The known numbers $\waez{m+1}{2,\dots,2,5}$, with $m = a \cdot 10 + b$:
  \begin{center}
    \begin{tabular}[c]{c||*{10}{c}}
      $\begin{array}[r]{c@{\hspace{0.5em}}r}
        & b\\[-1.9ex]
        a
      \end{array}$ & $0$ & $1$ & $2$ & $3$ & $4$ & $5$ & $6$ & $7$ & $8$ & $9$\\
      \hline\hline
      $0$ & \textbf{5} & \textbf{10} & \textbf{15} & \textbf{20} & \ul{21} & 22 & 23 & 26 & 30 & 32\\
      $1$ & 35 & 40 & 45 & 46 & 47 & 48 & 50 & 53 & 55 & 60\\
      $2$ & 65 & 70 & 71 & 72 & 73 & 74 & 75 & 80 & 85 & 90\\
      $3$ & 95 & 96 & 97 & 98 & 99 & 100 & 101 & 102 & 103
    \end{tabular}
  \end{center}
\item The known numbers $\waez{m+1}{2,\dots,2,6}$, with $m = a \cdot 10 + b$:
  \begin{center}
    \begin{tabular}[c]{c||*{10}{c}}
      $\begin{array}[r]{c@{\hspace{0.5em}}r}
        & b\\[-1.9ex]
        a
      \end{array}$ & $0$ & $1$ & $2$ & $3$ & $4$ & $5$ & $6$ & $7$ & $8$ & $9$\\
      \hline\hline
      $0$ & \textbf{6} & \textbf{11} & \textbf{16} & \textbf{21} & \textbf{27} & 28 & 30 & 31 & 34 & 38\\
      $1$ & 42 & 43 & 47 & 52 & 53 & 55 & 57 & 60 & 63 & 67\\
      $2$ & 69 & 72 & 77 & 78 & 79 & 81 & 84
    \end{tabular}
  \end{center}
\item The known numbers $\waez{m+1}{2,\dots,2,7}$, with $m = a \cdot 10 + b$:
  \begin{center}
    \begin{tabular}[c]{c||*{10}{c}}
      $\begin{array}[r]{c@{\hspace{0.5em}}r}
        & b\\[-1.9ex]
        a
      \end{array}$ & $0$ & $1$ & $2$ & $3$ & $4$ & $5$ & $6$ & $7$ & $8$ & $9$\\
      \hline\hline
      $0$ & \textbf{7} & \textbf{14} & \textbf{21} & \textbf{28} & \textbf{35} & \textbf{42} & \ul{43} & 44 & 45 & 47\\
      $1$ & 49 & 54 & 58 & 62 & 66 & 70 & 77 & 84 & 91 & 92\\
      $2$ & 93 & 94 & 96 & 97 & 99 & 105 & 108
    \end{tabular}
  \end{center}
\item The known numbers $\waez{m+1}{2,\dots,2,8}$, with $m = a \cdot 10 + b$:
  \begin{center}
    \begin{tabular}[c]{c||*{10}{c}}
      $\begin{array}[r]{c@{\hspace{0.5em}}r}
        & b\\[-1.9ex]
        a
      \end{array}$ & $0$ & $1$ & $2$ & $3$ & $4$ & $5$ & $6$ & $7$ & $8$ & $9$\\
      \hline\hline
      $0$ & \textbf{8} & \textbf{15} & \textbf{22} & \textbf{29} & \textbf{36} & \textbf{43} & \textbf{51} & 52 & 53 & 55\\
      $1$ & 57 & 60 & 64 & 70 & 73 & 79 & 81 & 86 & 93 & 100\\
      $2$ & 101 & 102 & 103
    \end{tabular}
  \end{center}
\item The known numbers $\waez{m+1}{2,\dots,2,9}$, with $m = a \cdot 10 + b$:
  \begin{center}
    \begin{tabular}[c]{c||*{10}{c}}
      $\begin{array}[r]{c@{\hspace{0.5em}}r}
        & b\\[-1.9ex]
        a
      \end{array}$ & $0$ & $1$ & $2$ & $3$ & $4$ & $5$ & $6$ & $7$ & $8$ & $9$\\
      \hline\hline
      $0$ & \textbf{9} & \textbf{18} & \textbf{25} & \textbf{32} & \textbf{39} & \textbf{46} & \ul{53} & 58 & 59 & 62\\
      $1$ & 66 & 72 & 74 & 77 & 81 & 87 & 91 & 97 & 102 & 106\\
      $2$ & 110
    \end{tabular}
  \end{center}
\item The known numbers $\waez{m+1}{2,\dots,2,10}$, with $m = a \cdot 10 + b$:
  \begin{center}
    \begin{tabular}[c]{c||*{10}{c}}
      $\begin{array}[r]{c@{\hspace{0.5em}}r}
        & b\\[-1.9ex]
        a
      \end{array}$ & $0$ & $1$ & $2$ & $3$ & $4$ & $5$ & $6$ & $7$ & $8$ & $9$\\
      \hline\hline
      $0$ & \textbf{10} & \textbf{19} & \textbf{29} & \ul{34} & \ul{41} & \ul{48} & \ul{55} & 62 & 65 & 69\\
      $1$ & 74 & 79 & 85 & 89 & 92 & 96 & 101 & 106 & 110
    \end{tabular}
  \end{center}
\item The known numbers $\waez{m+1}{2,\dots,2,11}$, with $m = a \cdot 10 + b$:
  \begin{center}
    \begin{tabular}[c]{c||*{10}{c}}
      $\begin{array}[r]{c@{\hspace{0.5em}}r}
        & b\\[-1.9ex]
        a
      \end{array}$ & $0$ & $1$ & $2$ & $3$ & $4$ & $5$ & $6$ & $7$ & $8$ & $9$\\
      \hline\hline
      $0$ & \textbf{11} & \textbf{22} & \textbf{33} & \textbf{44} & \textbf{55} & \textbf{66} & \textbf{77} & \textbf{88} & \textbf{99} & \textbf{110}\\
      $1$ & \ul{111} & 112 & 113 & 114 & 116 & 118 & 119 & 121 & 129
    \end{tabular}
  \end{center}
\item The known numbers $\waez{m+1}{2,\dots,2,12}$, with $m = a \cdot 10 + b$:
  \begin{center}
    \begin{tabular}[c]{c||*{10}{c}}
      $\begin{array}[r]{c@{\hspace{0.5em}}r}
        & b\\[-1.9ex]
        a
      \end{array}$ & $0$ & $1$ & $2$ & $3$ & $4$ & $5$ & $6$ & $7$ & $8$ & $9$\\
      \hline\hline
      $0$ & \textbf{12} & \textbf{23} & \textbf{34} & \textbf{45} & \textbf{56} & \textbf{67} & \textbf{78} & \textbf{89} & \textbf{100} & \textbf{111}\\
      $1$ & \textbf{123} & 124 & 125 & 126 & 127 & 129 & 130 & $133$
    \end{tabular}
  \end{center}
\item The known numbers $\waez{m+1}{2,\dots,2,13}$, with $m = a \cdot 10 + b$:
  \begin{center}
    \begin{tabular}[c]{c||*{10}{c}}
      $\begin{array}[r]{c@{\hspace{0.5em}}r}
        & b\\[-1.9ex]
        a
      \end{array}$ & $0$ & $1$ & $2$ & $3$ & $4$ & $5$ & $6$ & $7$ & $8$ & $9$\\
      \hline\hline
      $0$ & \textbf{13} & \textbf{26} & \textbf{39} & \textbf{52} & \textbf{65} & \textbf{78} & \textbf{91} & \textbf{104} & \textbf{117} & \textbf{130}\\
      $1$ & \textbf{143} & \textbf{156} & \ul{157} & 158 & 159 & 160 & 162 & $163$
    \end{tabular}
  \end{center}
\item The known numbers $\waez{m+1}{2,\dots,2,14}$, with $m = a \cdot 10 + b$:
  \begin{center}
    \begin{tabular}[c]{c||*{10}{c}}
      $\begin{array}[r]{c@{\hspace{0.5em}}r}
        & b\\[-1.9ex]
        a
      \end{array}$ & $0$ & $1$ & $2$ & $3$ & $4$ & $5$ & $6$ & $7$ & $8$ & $9$\\
      \hline\hline
      $0$ & \textbf{14} & \textbf{27} & \textbf{40} & \textbf{53} & \textbf{66} & \textbf{79} & \textbf{92} & \textbf{105} & \textbf{118} & \textbf{131}\\
      $1$ & \textbf{144} & \textbf{157} & \textbf{171} & 172 & 173 & 174 & 175 & 176
    \end{tabular}
  \end{center}
\item The known numbers $\waez{m+1}{2,\dots,2,15}$, with $m = a \cdot 10 + b$:
  \begin{center}
    \begin{tabular}[c]{c||*{10}{c}}
      $\begin{array}[r]{c@{\hspace{0.5em}}r}
        & b\\[-1.9ex]
        a
      \end{array}$ & $0$ & $1$ & $2$ & $3$ & $4$ & $5$ & $6$ & $7$ & $8$ & $9$\\
      \hline\hline
      $0$ & \textbf{15} & \textbf{30} & \textbf{43} & \textbf{56} & \textbf{69} & \textbf{82} & \textbf{95} & \textbf{108} & \textbf{121} & \ul{134}\\
      $1$ & \ul{147} & \ul{160} & \ul{173} & 184 & 185 & 186 & 188 & 189
    \end{tabular}
  \end{center}
\item The known numbers $\waez{m+1}{2,\dots,2,16}$, with $m = a \cdot 10 + b$:
  \begin{center}
    \begin{tabular}[c]{c||*{10}{c}}
      $\begin{array}[r]{c@{\hspace{0.5em}}r}
        & b\\[-1.9ex]
        a
      \end{array}$ & $0$ & $1$ & $2$ & $3$ & $4$ & $5$ & $6$ & $7$ & $8$ & $9$\\
      \hline\hline
      $0$ & \textbf{16} & \textbf{31} & \textbf{47} & \textbf{58} & \ul{71} & \ul{84} & \ul{97} & \ul{110} & \ul{123} & \ul{136}\\
      $1$ & \ul{149} & \ul{162} & \ul{175} & 188 & 197 & 199 & 200 & 202
    \end{tabular}
  \end{center}
\item The known numbers $\waez{m+1}{2,\dots,2,17}$, with $m = a \cdot 10 + b$:
  \begin{center}
    \begin{tabular}[c]{c||*{10}{c}}
      $\begin{array}[r]{c@{\hspace{0.5em}}r}
        & b\\[-1.9ex]
        a
      \end{array}$ & $0$ & $1$ & $2$ & $3$ & $4$ & $5$ & $6$ & $7$ & $8$ & $9$\\
      \hline\hline
      $0$ & \textbf{17} & \textbf{34} & \textbf{51} & \textbf{68} & \textbf{85} & \textbf{102} & \textbf{119} & \textbf{136} & \textbf{153} & \textbf{170}\\
      $1$ & \textbf{187} & \textbf{204} & \textbf{221} & \textbf{238} & \textbf{255} & \textbf{272} & \ul{273} & 274 & 275 & 276\\
      $2$ & 277
    \end{tabular}
  \end{center}
\item The known numbers $\waez{m+1}{2,\dots,2,18}$, with $m = a \cdot 10 + b$:
  \begin{center}
    \begin{tabular}[c]{c||*{10}{c}}
      $\begin{array}[r]{c@{\hspace{0.5em}}r}
        & b\\[-1.9ex]
        a
      \end{array}$ & $0$ & $1$ & $2$ & $3$ & $4$ & $5$ & $6$ & $7$ & $8$ & $9$\\
      \hline\hline
      $0$ & \textbf{18} & \textbf{35} & \textbf{52} & \textbf{69} & \textbf{86} & \textbf{103} & \textbf{120} & \textbf{137} & \textbf{154} & \textbf{171}\\
      $1$ & \textbf{188} & \textbf{205} & \textbf{222} & \textbf{239} & \textbf{256} & \textbf{273} & \textbf{291} & 292 & 293 & 294\\
      $2$ & 295
    \end{tabular}
  \end{center}
\item The known numbers $\waez{m+1}{2,\dots,2,19}$, with $m = a \cdot 10 + b$:
  \begin{center}
    \begin{tabular}[c]{c||*{10}{c}}
      $\begin{array}[r]{c@{\hspace{0.5em}}r}
        & b\\[-1.9ex]
        a
      \end{array}$ & $0$ & $1$ & $2$ & $3$ & $4$ & $5$ & $6$ & $7$ & $8$ & $9$\\
      \hline\hline
      $0$ & \textbf{19} & \textbf{38} & \textbf{57} & \textbf{76} & \textbf{95} & \textbf{114} & \textbf{133} & \textbf{152} & \textbf{171} & \textbf{190}\\
      $1$ & \textbf{209} & \textbf{228} & \textbf{247} & \textbf{266} & \textbf{285} & \textbf{304} & \textbf{323} & \textbf{342} & \ul{343} & 344\\
      $2$ & 345 & 346
    \end{tabular}
  \end{center}
\end{enumerate}

\subsection{Transversal GT-numbers}
\label{sec:transgtnumbers}

\begin{enumerate}
\item The known numbers $\gtz{m+1}{2,\dots,2,3}$, with $m = a \cdot 10 + b$:
    \begin{center}
    \begin{tabular}[c]{c||*{10}{c}}
      $\begin{array}[r]{c@{\hspace{0.5em}}r}
        & b\\[-1.9ex]
        a
      \end{array}$ & $0$ & $1$ & $2$ & $3$ & $4$ & $5$ & $6$ & $7$ & $8$ & $9$\\
      \hline\hline
      $0$ & 4 & 7 & 9 & 13 & 14 & 16 & 18 & 21 & 22 & 23\\
      $1$ & 28 & 29 & 30 & 32 & 33 & 36 & 38 & 39 & 40 & 42\\
      $2$ & 43 & 47 & 48 & 49 & 50 & 52 & 55 & 56 & 57 & 59\\
      $3$ & 61 & 62 & 65 & 68 & 69 & 70 & 71 & 72 & 73 & 75\\
      $4$ & 76 & 78 & 80 & 81
    \end{tabular}
  \end{center}
\item The known numbers $\gtz{m+1}{2,\dots,2,4}$, with $m = a \cdot 10 + b$:
    \begin{center}
    \begin{tabular}[c]{c||*{10}{c}}
      $\begin{array}[r]{c@{\hspace{0.5em}}r}
        & b\\[-1.9ex]
        a
      \end{array}$ & $0$ & $1$ & $2$ & $3$ & $4$ & $5$ & $6$ & $7$ & $8$ & $9$\\
      \hline\hline
      $0$ & 9 & 14 & 17 & 22 & 26 & 32 & 35 & 36 & 37 & 45\\
      $1$ & 46 & 51 & 56 & 58 & 61 & 62 & 71 & 72 & 73 & 78\\
      $2$ & 79 & 83\\
    \end{tabular}
  \end{center}
\item The known numbers $\gtz{m+1}{2,\dots,2,5}$, with $m = a \cdot 10 + b$:
    \begin{center}
    \begin{tabular}[c]{c||*{10}{c}}
      $\begin{array}[r]{c@{\hspace{0.5em}}r}
        & b\\[-1.9ex]
        a
      \end{array}$ & $0$ & $1$ & $2$ & $3$ & $4$ & $5$ & $6$ & $7$ & $8$ & $9$\\
      \hline\hline
      $0$ & 10 & 31 & 32 & 49 & 58 & 61 & 62 & 78 & 87 & 98\\
      $1$ & 107 & 112 & 121 & 123 & 142 & 143
    \end{tabular}
  \end{center}
\item The known numbers $\gtz{m+1}{2,\dots,2,6}$, with $m = a \cdot 10 + b$:
    \begin{center}
    \begin{tabular}[c]{c||*{10}{c}}
      $\begin{array}[r]{c@{\hspace{0.5em}}r}
        & b\\[-1.9ex]
        a
      \end{array}$ & $0$ & $1$ & $2$ & $3$ & $4$ & $5$ & $6$ & $7$ & $8$ & $9$\\
      \hline\hline
      $0$ & 37 & 55 & 64 & 71 & 90 & 97 & 125 & 152 & 162 & 179\\
      $1$ & 201 & 204 & 211 & 212 & 250
    \end{tabular}
  \end{center}
\item The known numbers $\gtz{m+1}{2,\dots,2,7}$, with $m = a \cdot 10 + b$:
    \begin{center}
    \begin{tabular}[c]{c||*{10}{c}}
      $\begin{array}[r]{c@{\hspace{0.5em}}r}
        & b\\[-1.9ex]
        a
      \end{array}$ & $0$ & $1$ & $2$ & $3$ & $4$ & $5$ & $6$ & $7$ & $8$ & $9$\\
      \hline\hline
      $0$ & 155 & 214 & 228 & 232 & 323 & 396 & 570 & 641 & 715 & 796\\
      $1$ & 827 & 872 & 875 & 1048 & 1125 & 1158 & 1180
    \end{tabular}
  \end{center}
\item The known numbers $\gtz{m+1}{2,\dots,2,8}$, with $m = a \cdot 10 + b$:
    \begin{center}
    \begin{tabular}[c]{c||*{10}{c}}
      $\begin{array}[r]{c@{\hspace{0.5em}}r}
        & b\\[-1.9ex]
        a
      \end{array}$ & $0$ & $1$ & $2$ & $3$ & $4$ & $5$ & $6$ & $7$ & $8$ & $9$\\
      \hline\hline
      $0$ & 263 & 349 & 665 & 789 & 1323 & 1428 & 1447 & 1473 & 1555 & 1801\\
      $1$ & 1881 & 1935 & 1979 & 2117
    \end{tabular}
  \end{center}
\end{enumerate}

\end{appendix}

\end{document}